\numberwithin{equation}{section}
\newcommand{\beq}{\begin{equation}}
\newcommand{\eeq}{\end{equation}}
\newcommand{\bea}{\begin{eqnarray}}
\newcommand{\eea}{\end{eqnarray}}
\newcommand{\nn}{\nonumber}
\newcommand\noi{\noindent}
\newcommand{\tbf}{\textbf}
\newcommand{\rd}{\mathrm{d}}
\newcommand{\bk}{\begin{cases}}
\newcommand{\ek}{\end{cases}}
\newcommand{\bepm}{\begin{pmatrix} }
\newcommand{\epm}{\end{pmatrix}}
\newcommand{\bs}{\boldsymbol}
\newcommand{\cH}{\mathcal{H}}
\DeclareMathOperator{\rank}{rank}
\newtheorem{definition}{Definition}
\newtheorem{proposition}{Proposition}
\newtheorem{theorem}{Theorem}
\newtheorem{lemma}{Lemma}
\theoremstyle{definition}
\newtheorem{remark}{\textbf{Remark}}
\newtheorem{example}{\textbf{Example}}
\begin{document}

\author{Rafael Azuaje}
\address{Departamento de F\'isica, Universidad Aut\'onoma Metropolitana Unidad Iztapalapa,
San Rafael Atlixco 186, 09340, Ciudad de M\'exico, M\'exico}
\email{razuaje@xanum.uam.mx}
\author{Piergiulio Tempesta}
\address{Departamento de F\'{\i}sica Te\'{o}rica, Facultad de Ciencias F\'{\i}sicas,
  Universidad Complutense de Madrid, 28040 -- Madrid, Spain \\ and Instituto de Ciencias
  Matem\'aticas, C/ Nicol\'as Cabrera, No 13--15, 28049 Madrid, Spain}
\email{p.tempesta@fis.ucm.es, piergiulio.tempesta@icmat.es}

\title[Jacobi-Haantjes manifolds, Integrability and Dissipative Systems]{Jacobi-Haantjes manifolds, Integrability \\ and Dissipative Mechanical Systems}

\subjclass[2020]{MSC: 70G45, 70H06, 53D10, 53A45.}

\date{December 21, 2025}

\begin{abstract}
The notion of Jacobi--Haantjes manifold, consisting of a Jacobi manifold endowed with an algebra of extended Haantjes operator fields,  is proposed as a natural geometric framework which allows us to define the notion of integrability of both conservative and dissipative  Hamiltonian systems, in a unified way. As a reduction, \textit{contact-Haantjes} manifolds are defined.
We prove that the integrability of a contact Hamiltonian system is equivalent to the existence of a suitable Abelian extended Haantjes algebra associated with the system. This result allows us to define a large class of new, completely integrable contact Hamiltonian systems from a given extended Haantjes algebra.

Moreover, we propose a theory of separation of variables for dissipative systems. This result is achieved by lifting a dissipative system into a higher-dimensional manifold, obtained as the symplectization of the Jacobi-Haantjes structure associated with the system. This new manifold naturally acquires the structure of a symplectic-Haantjes manifold. We prove that the Darboux-Haantjes coordinates which separate the Hamilton--Jacobi equation of the higher-dimensional symplectic-Haantjes manifold are in fact separation variables for the Hamilton equations associated with the original dissipative system.
\end{abstract}

\maketitle

\tableofcontents

\section{Introduction}
\label{introduction}

Over the past two decades, there has been renewed interest in the geometry of Nijenhuis and Haantjes tensors and their role in integrable systems. The notion of Nijenhuis torsion was introduced by Nijenhuis \cite{Nij1951} in the context of integrability of eigen-distributions associated with operator fields possessing pointwise distinct eigenvalues. Shortly thereafter, Frölicher and Nijenhuis introduced the graded bracket that now bears their name \cite{FN1956}, which has become a fundamental tool in differential geometry, notably in the theory of almost complex structures as shown by the Newlander–Nirenberg theorem \cite{NN1957,KMS1993}. Building on these ideas, Haantjes introduced the torsion that now carries his name and proved that the vanishing of the Haantjes torsion is a necessary condition for the existence of an integrable frame of generalized eigenvectors \cite{Haa1955}; for pointwise semisimple operators, this condition is also sufficient.
\vspace{2mm}

In recent years, Nijenhuis and Haantjes tensors have found important new applications. They play a central role in the characterization of integrable chains of hydrodynamic-type partial differential equations \cite{FeMa,BogRey}; in the theory of infinite-dimensional integrable systems, and in the study of the WDVV equations and Dubrovin–Frobenius manifolds \cite{MGall13, MGall17}. The notion of a Haantjes algebra, introduced as a free module of Haantjes operators closed under composition \cite{TT2021JGP}, has further enriched this framework. In this context, symplectic--Haantjes manifolds were proposed as an alternative geometric setting for finite-dimensional integrable Hamiltonian systems \cite{TT2022AMPA}, alongside the closely related notion of Poisson--Haantjes manifolds \cite{T2018TMP}. These structures are naturally connected with symplectic–Nijenhuis \cite{MMQUAD1984, FPMPAG2003} and Poisson–Nijenhuis \cite{MMQUAD1984} geometries and are inspired by Magri’s notion of Haantjes manifolds \cite{MGall13}.
\vspace{2mm}

The theory of symplectic-Haantjes manifolds has recently been further developed as a natural framework  for systematically carrying out the separation of variables for the Hamilton--Jacobi equation associated with an integrable conservative system \cite{TT2022AMPA,RTT2022CNS,RTT2024AMPA}.
\vspace{2mm}

This work aims to extend the symplectic-Haantjes geometry in order to provide a unified framework allowing us to discuss both conservative and dissipative systems within a novel, general theoretical setting. Our approach is based on merging Haantjes geometry with the theory of Jacobi manifolds, which generalize several well-known geometric structures, including Poisson, contact, and locally conformal symplectic geometries.
\vspace{2mm}

The results presented in this work are of a twofold nature: on one hand, we introduce new, general geometric structures, which are of intrinsic mathematical interest; on the other hand, we establish theorems relating these structures with Hamiltonian dynamics, with special emphasis on the concepts of complete integrability of dissipative systems.
\vspace{2mm}

More specifically, we define \textit{Jacobi--Haantjes manifolds} by combining Jacobi structures with suitable algebras of extended Haantjes operators. This construction generalizes earlier notions of Jacobi–Nijenhuis manifolds \cite{MMPCRAS1999,PNJGP2003} and extends them to the Haantjes setting. A key ingredient of our approach is the introduction of \textit{extended Haantjes algebras} acting on pairs consisting of vector fields and smooth functions.
\vspace{2mm}

It is well known that transitive Jacobi manifolds reduce either to contact manifolds or locally conformal symplectic (LCS) manifolds, depending on whether their dimension is odd or even \cite{Marle1991}. Hamiltonian mechanics on contact manifolds is particularly suitable for the analysis of dissipative systems \cite{BCTAOP2017,dLS2017}. Contact geometry has found interesting applications (see e.g., \cite{CN1996FP,R2008AOP,G2008,AS}). On the other hand, Hamiltonian mechanics on LCS manifolds studies systems whose dynamics can be locally described by a Hamiltonian system on a symplectic manifold, but it cannot be done globally \cite{esen2021hamilton}.
\vspace{2mm}

A central result of the paper is an extension of the Liouville–Haantjes theorem to the contact setting. We prove that the complete integrability of a contact Hamiltonian system is equivalent to the existence of a suitable Abelian extended Haantjes algebra associated with the system. This result allows us to construct broad new families of completely integrable dissipative Hamiltonian systems in arbitrary dimension.
\vspace{2mm}

Another major contribution of this work is the development of a theory of separation of variables for dissipative systems. By lifting a dissipative Hamiltonian system on a Jacobi-Haantjes manifold to a conservative system on a higher-dimensional symplectic manifold via a Poissonization procedure \cite{TT2016SIGMA,TT2022CMP,RTT2022CNS,RTT2024AMPA}, we obtain a symplectic--Haantjes structure for the lifted dynamics. We show that Darboux–Haantjes coordinates separating the Hamilton–Jacobi equation of the lifted system induce separation variables for the Hamilton's equations of the original dissipative dynamics.
\vspace{2mm}

The framework introduced here provides a unifying perspective bridging conservative and dissipative integrable systems. Several directions for future research naturally arise from this work, including the extension of the present theory to locally conformal symplectic manifolds, the study of cosymplectic and cocontact geometries within the Haantjes framework, and the development of a general theory of reductions for Jacobi--Haantjes manifolds.
\vspace{2mm}

The paper is organized as follows. In Section \ref{sec2}, we review the basic notions of Haantjes geometry and Jacobi structures. In Section \ref{sec3}, we introduce an extension of Haantjes operators suitable for the Jacobi setting. In Section \ref{sec4}, we define the Jacobi--Haantjes manifolds and study their main properties. Section \ref{sec5} focuses on contact–Haantjes manifolds and the integrability of contact Hamiltonian systems. In Section \ref{sec6}, we develop a theory of separation of variables for dissipative systems. Finally, in Section \ref{sec7}, we present new examples of completely integrable dissipative Hamiltonian systems together with their associated Jacobi--Haantjes structures.

\section{Haantjes geometry and Jacobi structures}
\label{sec2}

In this section, we shall revise the basic aspects of the Haantjes geometry.

\subsection{Haantjes operators}
\label{subsec:2.1}
The fundamental notions concerning the geometry of Nijenhuis and Haantjes torsions, which we present here, were originally derived in the papers \cite{Haa1955,Nij1951,FN1956}.
\vspace{2mm}

Let $M$ be a differentiable manifold, $\mathfrak{X}(M)$ the Lie algebra of all smooth vector fields on $M$ and $\boldsymbol{K}:\mathfrak{X}(M)\rightarrow \mathfrak{X}(M)$ be a smooth $(1,1)$ tensor field (namely, an operator field).  From now on the terms ``tensor fields'' and ``operator fields'' will be abbreviated to tensors and operators, respectively. In the following, all tensors are assumed to be smooth. 
\begin{definition} \label{def:N}
The
 \textit{Nijenhuis torsion} of $\boldsymbol{K}$ is the vector-valued $2$-form  defined by
\begin{equation} \label{eq:Ntorsion}
\tau_ {\boldsymbol{K}} (X,Y):=\boldsymbol{K}^2[X,Y] +[\boldsymbol{K}X,\boldsymbol{K}Y]-\boldsymbol{K}\Big([X,\boldsymbol{K}Y]+[\boldsymbol{K}X,Y]\Big),
\end{equation}
where $X,Y \in \mathfrak{X}(M)$ and $[ \ , \ ]$ denotes the Lie bracket of two vector fields.
\end{definition}
\begin{definition} \label{def:H}
 \noi The \textit{Haantjes torsion} of $\boldsymbol{K}$ is the vector-valued $2$-form defined by
\begin{equation} \label{eq:Haan}
\mathcal{H}_{\boldsymbol{K}}(X,Y):=\boldsymbol{K}^2\tau_{\boldsymbol{K}}(X,Y)+\tau_{\boldsymbol{K}}(\boldsymbol{K}X,\boldsymbol{K}Y)-\boldsymbol{K}\Big(\tau_{\boldsymbol{K}}(X,\boldsymbol{K}Y)+\tau_{\boldsymbol{K}}(\boldsymbol{K}X,Y)\Big).
\end{equation}
\end{definition}

\begin{definition}
\label{deHaantjes}
A Haantjes (Nijenhuis)   operator is a (1,1)-tensor  whose  Haantjes (Nijenhuis) torsion identically vanishes.
\end{definition}

A simple, relevant case of Haantjes operator is that of a tensor  $\boldsymbol{K}$  which takes a diagonal form in a local chart $\boldsymbol{x}=(x^1,\ldots,x^n)$:
\begin{equation}
\boldsymbol{K}(\boldsymbol{x})=\sum _{i=1}^n \lambda_{i }(\boldsymbol{x}) \frac{\partial}{\partial x^i}\otimes \rd x^i. \label{eq:Ldiagonal}
 \end{equation}
Here $\lambda_{i }(\boldsymbol{x}):=\lambda^{i}_{i}(\boldsymbol{x})\in C^{\infty}(M)$ are the eigenvalues of $\boldsymbol{K}$ and  $\left(\frac{\partial}{\partial x^1},\ldots, \frac{\partial}{\partial x^n}\right) $ are the fields forming the so-called \textit{natural frame} associated with the local chart $(x^{1},\ldots,x^{n})$. As is well known, the Haantjes torsion of the diagonal operator \eqref{eq:Ldiagonal} vanishes. 
\vspace{2mm}


Haantjes operators exhibit a wealth of algebraic properties, in some cases even richer than those of Nijenhuis operators. Relevant examples of Haantjes operators in classical mechanics and in Riemannian geometry have been proposed, for instance, in \cite{RTT2022CNS},  \cite{RTT2024AMPA}, \cite{TT2022AMPA}, \cite{TT2021JGP}.

\subsection{Haantjes algebras and Haantjes chains}
\label{subsec:2.2}
The crucial concept of Haantjes algebras has been introduced in \cite{TT2021JGP} and further studied and generalized in \cite{RTT2023JNS}. 
\begin{definition}\label{def:HA}
A Haantjes algebra on $M$ is a set $\mathscr{H}$ of Haantjes  operators $\boldsymbol{K}:\mathfrak{X}(M)\to \mathfrak{X}(M)$ that generate:
\begin{itemize}
\item
a free module over the ring of smooth functions on $M$:
\begin{equation}
\label{eq:Hmod}
\mathcal{H}_{\left( f\boldsymbol{K}_{1} + g\boldsymbol{K}_2 \right)}(X,Y)= \mathbf{0}
 \, , \qquad\forall\, X, Y \in \mathfrak{X}(M) \, , \quad \forall\, f,g \in C^\infty(M) \,  , \quad \forall ~\boldsymbol{K}_1,\boldsymbol{K}_2 \in  \mathscr{H} ;
\end{equation}
  \item
a ring  w.r.t. the composition operation:
\begin{equation}
 \label{eq:Hring}
\mathcal{H}_{(\boldsymbol{K}_1 \, \boldsymbol{K}_2)}(X,Y)=\mathbf{0} \, , \qquad
\forall\, \boldsymbol{K}_1,\boldsymbol{K}_2\in  \mathscr{H} , \quad\forall\, X, Y \in \mathfrak{X}(M)\, .
\end{equation}
\end{itemize}
If $\boldsymbol{K}_1\,\boldsymbol{K}_2=\boldsymbol{K}_2\,\boldsymbol{K}_1,$ $\forall~\boldsymbol{K}_1,\boldsymbol{K}_2 \in  \mathscr{H},$
the  algebra $\mathscr{H}$ is said to be an Abelian  Haantjes algebra. 
\end{definition}
In essence, we may regard $\mathscr{H}$ as an associative algebra of Haantjes operators.

\begin{remark}
Abelian Haantjes algebras admit the existence of coordinate systems, called \textit{Haantjes coordinates}, in which all $\boldsymbol{K}\in \mathscr{H}$ can be expressed simultaneously in a block-diagonal form. In particular, if $\mathscr{H}$ consists of semisimple operators, then each $\boldsymbol{K}\in \mathscr{H}$ takes a purely diagonal form in these coordinates \cite{TT2021JGP}. 
\end{remark}

Haantjes chains, in the present formulation, have been introduced in \cite{TT2022AMPA} and will be extensively used in the forthcoming analysis.
\begin{definition} \label{def:7}
 Let $\mathscr{H}$ be a Haantjes algebra on $M$ of rank $m$. We shall say that a smooth function $H$ generates a Haantjes chain $\mathscr{C}$ of closed 1-forms of length $m$ if there exists
a distinguished basis $\{\boldsymbol{K}_1, \ldots, \boldsymbol{K}_m\}$ of $\mathscr{H}$
 such that
\begin{equation} \label{eq:MHchain}
\rd (\boldsymbol{K}^T_i \,\rd H)=\boldsymbol{0} \ , \quad\qquad i=1,\ldots ,m \ ,
\end{equation}
where $\boldsymbol{K}^{T}_{i}: \mathfrak{X}^{*}(M) \to \mathfrak{X}^{*}(M)$ is the transposed operator of $\boldsymbol{K}_{i}$. The (locally) exact 1-forms 
\beq
\rd H_i=\boldsymbol{K}^T_i \,\rd H,
\eeq
which are supposed to be linearly independent, are said to be the elements of the Haantjes chain of length $m$ generated by $H$. The functions $H_i$ are called the potential functions of the chain.
\end{definition}

\subsection{The symplectic-Haantjes manifolds}
\label{subsec:2.3}
If we endow a symplectic manifold with an algebra of Haantjes operators compatible with the symplectic structure, we obtain a symplectic-Haantjes (or $\omega \mathscr{H}$) manifold. Beyond their intrinsic mathematical properties,  these manifolds are significant because they provide a simple yet sufficiently general framework in which the theory of conservative Hamiltonian integrable systems can be rigorously formulated.

\begin{definition}\label{def:oHman}
A symplectic--Haantjes (or $\omega \mathscr{H}$) manifold  of class $m$ is a triple $( M,\omega,\mathscr{H})$ which satisfies the following properties:
\begin{itemize}
\item[(i)]
$(M,\omega)$  is a   symplectic  manifold of dimension $ 2 \, n$;
\item[(ii)]
$\mathscr{H}$ is an Abelian Haantjes algebra of rank $m$;
\item[(iii)]
$(\omega,\mathscr{H})$ are algebraically compatible, that is
\beq
\omega(X,\boldsymbol{K} Y)=\omega(\boldsymbol{K} X,Y) \, , \qquad \forall \boldsymbol{K} \in \mathscr{H} ,
\eeq
or equivalently
\begin{equation}\label{eq:compOmH}
\boldsymbol{\Omega}\, \boldsymbol{K} =\boldsymbol{K}^T\boldsymbol{\Omega} \, ,\qquad \ \forall \boldsymbol{K} \in \mathscr{H} .
\end{equation}
\end{itemize}
\noi
 Hereafter $\boldsymbol{\Omega}:=\omega ^\flat:\mathfrak{X}(M) \rightarrow \mathfrak{X}^{*}(M)$ denotes the  fiber bundles isomorphism defined by
\beq
\omega(X,Y)=\langle \boldsymbol{\Omega} X,Y \rangle \, , \qquad\qquad\forall X, Y \in \mathfrak{X}(M).
\eeq
The map $P:=\boldsymbol{\Omega}^{-1}:\mathfrak{X}^{*}(M) \rightarrow \mathfrak{X}(M)$  is the Poisson bivector induced by  the symplectic two-form $\omega$.
\par
\end{definition}

A crucial result in the theory of $\omega \mathscr{H}$ manifolds is the following characterization of the notion of  integrability in the sense  of Liouville--Arnold.
\begin{theorem}[Liouville--Haantjes \cite{TT2022AMPA}]\label{th:HAL}
Let $M$ be a $2n$-dimensional Abelian $\omega \mathscr{H}$ manifold of class $n$ and $\{H_1,H_2, \ldots, H_n\}$ be smooth potential functions of a Haantjes chain generated by a function $H$.  Then, the foliation generated by these functions is Lagrangian. Consequently,  each Hamiltonian system, with Hamiltonian functions $H$ and $H_\alpha$, $1\le \alpha \le n$, is integrable by quadratures.
Conversely, let us consider  a completely integrable system with $n$ degrees of freedom, defined by a Hamiltonian $H$ and a set of $n$  integrals of motion  $\{H_1,\ldots,H_n \}$ in involution  and independent among each other. Let $\{(J_k,\phi_k)\}$,   $k =1,\ldots, n$, denote a set of action--angle variables, with associated frequencies $\nu_{k}(\boldsymbol{J}):=\frac{\partial{H}}{\partial J_{k}}$. If $H$ is non degenerate, that is
\begin{equation}\label{eq:And}
\det\left(\frac{\partial \nu_k}{\partial J_i} \right)=\det\left(\frac{\partial^2 H}{\partial J_i \partial J_k} \right)\neq 0 \ ,
\end{equation}
then $M$ admits, in any tubular neighborhood of an Arnold  torus, a semisimple $\omega \mathscr{H}$ structure  whose Haantjes algebra is generated by the operators
\beq \label{eq:LAA}
\boldsymbol{K}_\alpha=\sum _{i=1}^n \frac{\nu_i^{(\alpha)}(\boldsymbol{J})}{\nu_i (\boldsymbol{J})}\bigg (\frac{\partial}{\partial J_i}\otimes \rd J_i +\frac{\partial}{\partial \phi_i}\otimes \rd \phi_i \bigg )\quad \alpha=1,\ldots,n
\ ,
 \eeq
 where $\nu_i^{(\alpha)}(\boldsymbol{J})$ are the frequencies of the $(\alpha)-nth$ linear flow.
 \end{theorem}

\subsection{Contact Hamiltonian mechanics}
\label{seccontact}
Hamiltonian systems on contact manifolds naturally describe dissipative systems (for details see \cite{BCTAOP2017,dLS2017}), indeed, the natural phase space for time-independent dissipative Hamiltonian systems is the canonical contact manifold $\mathbb{R}\times T^{*}Q$. Contact systems also have interesting applications in the modern geometric formulation of thermodynamics \cite{BEnt2017}.
For the sake of completeness, we briefly review the formalism of contact Hamiltonian systems.

\begin{definition}
\label{deContact}
A contact manifold is an odd dimensional manifold $M$ equipped with a 1-form $\theta$ such that $\theta\wedge \rd \theta^{n}\neq 0$, where $dim(M)=2n+1$. 
\end{definition}
In the literature, the manifolds in Definition \ref{deContact} are sometimes said to be \textit{co-oriented} contact manifolds \cite{LL2020,G2008,GrGr2022}.



\vspace{2mm}
Let $(M,\theta)$ be a contact manifold. For each $f\in C^{\infty}(M)$, there is a vector field $X_{f}\in \mathfrak{X}(M)$, called the \textit{Hamiltonian vector field} for $f$, which is the only one vector field on $M$ such that
\begin{equation}
X_{f}\lrcorner \theta = -f \hspace{1cm}\rm{and}\hspace{1cm} X_{f}\lrcorner d\theta =df-(Rf)\theta,
\end{equation}
where $R$ is the Reeb vector field which obeys 
\begin{equation}
R\lrcorner \theta =1 \hspace{1cm}\rm{and}\hspace{1cm} R\lrcorner d\theta =0.
\end{equation}
Each contact structure on a manifold defines a Lie bracket on the space of smooth functions on such a manifold. Indeed, the bracket defined for any pair of functions $f,g\in C^{\infty}(M)$ by
\begin{equation}
\lbrace f,g\rbrace=X_{g}f+fRg,
\end{equation}
is a Lie bracket on $C^{\infty}(M)$. In addition this bracket fulfills the weak Leibniz rule, namely
$supp(\lbrace f,g\rbrace)\subseteq supp(f)\cap supp(g)$. A Lie bracket satisfying the weak Leibniz rule is called a \textit{Jacobi bracket}. Manifolds endowed with a Jacobi bracket are called Jacobi manifolds; thus, every contact manifold is a Jacobi manifold. Formally we have the following
\begin{definition}
A Jacobi manifold is a triple $(M,\Lambda,E)$ where $M$ is a smooth manifold, $\Lambda$ is a skew-symmetric contravariant smooth 2-tensor field on $M$ and $E$ is a smooth vector field on $M$ such that
\begin{equation}
[\Lambda,\Lambda]_{SN}=2E\wedge \Lambda \qquad\text{and}\qquad [\Lambda,E]_{SN}=0.
\end{equation}
Here $[\cdot,\cdot]$ denotes the Schouten-Nijenhuis bracket on $M$. The pair $(\Lambda,E)$ is said to be a Jacobi structure on $M$. 
\begin{remark}
If $E=0$, then $[\Lambda,\Lambda]=0$ and a Jacobi manifold $(M,\Lambda,E)$ reduces to a Poisson manifold $(M,P)$ with the identification $P=\Lambda$. 
\end{remark}
\end{definition}
Each Jacobi structure $(\Lambda,E)$ induces a Jacobi bracket on $C^{\infty}(M)$ defined  for all $f,g\in C^{\infty}(M)$ by
\begin{equation} \label{eq:2.16}
\lbrace f,g\rbrace=\Lambda(\rd f,\rd g)+fEg-gEf.
\end{equation}
Conversely, any Jacobi bracket arises from a Jacobi structure \cite{ILM1997,Marle1991}. 
\vspace{2mm}

Given a Jacobi manifold $(M,\Lambda,E)$, for each $f\in C^{\infty}(M)$ there is one and only one smooth vector field $X_{f}$ on $M$ such that
\begin{equation}
\lbrace g,f\rbrace=X_{f}g+gEf \, \forall g\in C^{\infty}(M).
\end{equation}
$X_{f}$ is called the Hamiltonian vector field for $f$. In terms of the induced map $\Lambda^{\sharp}:\Omega^{1}(M)\longrightarrow \mathfrak{X}(M)$ given by $\beta(\Lambda^{\sharp}(\alpha))=\Lambda(\beta,\alpha)$, we have 
\begin{equation}
X_{f}=\Lambda^{\sharp}(df)-fE.
\end{equation}

The fundamental notions of Jacobi and Poisson manifolds were introduced by Lichnerowicz in \cite{Lich77,Lich78}. The theory of Jacobi structures has been further developed in \cite{Marle1991,ILM1997,LLMP1999,Esen2022,
Vais2002}.
\vspace{2mm}

The Jacobi structure $(\Lambda,E)$ on $M$ defined by the contact form $\theta$ is
\begin{equation}
\Lambda(\alpha,\beta)=\rd \theta(\sharp(\alpha),\sharp(\beta))\quad\text{and}\quad E=\sharp(\theta),
\end{equation}
where $\sharp$ is the inverse of the map $\flat:\mathfrak{X}(M)\longrightarrow\Omega^{1}(M)$ defined by 
\begin{equation}
\flat(X)=\iota_{X} \rd \theta +(\iota_{X} \theta) \theta.
\end{equation}
Observe that $E=R$.
\vspace{2mm}

Each contact manifold admits sets of Darboux coordinates. In fact, around any point $x\in M$ there exist local coordinates $(q^{1},\ldots,q^{n},p_{1},\ldots,p_{n},z)$, called canonical coordinates, such that
\begin{equation}
\theta=dz-\sum_{i=1}^{n}p_{i}dq^{i}.
\end{equation}
Thus, $\Lambda$ and $E$ (or $R$) can be expressed in local form as
\begin{equation}
\Lambda=\sum_{i=1}^{n}\left(\frac{\partial}{\partial q^{i}}+p_{i}\frac{\partial}{\partial z}\right)\wedge \frac{\partial}{\partial p_{i}}\quad\text{and}\quad E=\frac{\partial}{\partial z}.
\end{equation}
The Hamiltonian vector field $X_{f}$ for $f$ has the local form
\begin{equation}
X_{f}=\sum_{i=1}^{n}\left(\frac{\partial f}{\partial p_{i}}\frac{\partial}{\partial q^{i}}-\left( \frac{\partial f}{\partial q^{i}}+p_{i}\frac{\partial f}{\partial z}\right) \frac{\partial}{\partial p_{i}}\right)+\left(\sum_{i=1}^{n}p_{i}\frac{\partial f}{\partial p_{i}}-f\right)\frac{\partial}{\partial z},
\end{equation}
and the local expression for the Jacobi bracket of two functions $f,g\in C^{\infty}(M)$ is
\begin{equation}
\lbrace f,g\rbrace=\sum_{i=1}^{n}\left(\frac{\partial f}{\partial q^{i}}\frac{\partial g}{\partial p_{i}}-\frac{\partial f}{\partial p_{i}}\frac{\partial g}{\partial q^{i}}\right)+\frac{\partial f}{\partial z}\left(\sum_{i=1}^{n}p_{i}\frac{\partial g}{\partial p_{i}}-g\right)-\frac{\partial g}{\partial z}\left(\sum_{i=1}^{n}p_{i}\frac{\partial f}{\partial p_{i}}-f\right).
\end{equation}

Given a function $H\in C^{\infty}(M)$, $X_{H}$ defines Hamiltonian dynamics on $M$. In fact, the equations of motion in canonical coordinates for the dynamics defined by $X_{H}$ are the so-called \textit{contact Hamilton's equations} \cite{LL2019,dLS2017,BCTAOP2017}, namely
\begin{equation}
\dot{q^{i}} =\frac{\partial H}{\partial p_{i}}, \hspace{1cm}
\dot{p_{i}} =-\left( \frac{\partial H}{\partial q^{i}}+p_{i}\frac{\partial H}{\partial z}\right) , \hspace{1cm}
\dot{z}=p_{i}\frac{\partial H}{\partial p_{i}}-H.
\end{equation}
In this case, we will say that $(M,\theta,H)$ is a contact Hamiltonian system with Hamiltonian function $H$. The evolution of a function $f\in C^{\infty}(M)$ (an observable) along the trajectories of the system reads
\begin{equation}
\dot{f}=L_{X_{H}}f=X_{H}f=\lbrace f,H\rbrace-fRH.
\end{equation}
A function $f$ is a constant of motion if $L_{X_{H}}f=0$ (equivalently, $\lbrace f,H\rbrace-fRH=0$). The Hamiltonian function $H$ is not a constant of motion, indeed,
\begin{equation}
\dot{H}=-HRH.
\end{equation}
It is said that $H$ is a \textit{dissipated quantity} \cite{BG2023,GGMRR2020,LL2020}.

\section{An extension of the theory of Haantjes operators}
\label{sec3}

The notion of \textit{Jacobi-Nijenhuis manifolds} was introduced in \cite{MMPCRAS1999}. Another, stricter definition was proposed in \cite{PNJGP2003}. The main motivation for the latter construction is that it generalizes in a natural way the notion of Poisson-Nijenhuis manifolds introduced by Magri and Morosi in \cite{MMQUAD1984} in the study of Hamiltonian integrable systems. In both constructions \cite{MMPCRAS1999} and \cite{PNJGP2003}, the notion of Nijenhuis torsion has to be extended conveniently.
The aim of this section is to extend these ideas to the more general setting of Haantjes geometry.

\vspace{2mm}
Let $(M, \Lambda, E)$ be a Jacobi manifold. Following \cite{PNJGP2003}, we consider the quadruple $(\bs{K}, Y, \gamma, k)$, where $\bs{K}$ is a (1,1)-tensor field, $Y\in \mathfrak{X}(M)$, $\gamma\in \Omega^1(M)$ and $k\in C^{\infty}(M)$. Out of these objects we construct a new linear map. Precisely, let $\bs{\mathcal{K}}: \mathfrak{X}(M)\times C^{\infty}(M, \mathbb{R})\to \mathfrak{X}(M)\times C^{\infty}(M, \mathbb{R})$ be the $C^{\infty}(M, \mathbb{R})$-linear map defined by
\beq \label{eq:3.1}
\bs{\mathcal{K}}(X,f):= (\bs{K}X+fY, \gamma(X) +kf)
\eeq
for all pairs $(X,f)\in \mathfrak{X}(M) \times C^{\infty}(M, \mathbb{R})$.  Note that the space $\mathfrak{X}(M)\times C^{\infty}(M, \mathbb{R})$ converts into a real Lie algebra once endowed with the bracket
$[,]: \mathfrak{X}(M)\times C^{\infty}(M, \mathbb{R}) \times \mathfrak{X}(M)\times C^{\infty}(M, \mathbb{R})$ defined by
\beq
[(X,f),(Z,h)]= \big([X,Z], \rd h(X) - \rd f(Z)\big),
\eeq
for all $(X,f), (Z,h)\in \mathfrak{X}(M)\times C^{\infty}(M, \mathbb{R})$.
Let us also introduce the $C^{\infty}(M, \mathbb{R})$ map $(\alpha, f): \mathfrak{X}(M) \times C^{\infty}(M, \mathbb{R})\to C^{\infty}(M)$ defined by
\beq
(\alpha, f)(X,h):= \alpha(X)+ fh.
\eeq
The  transposed map
$\bs{\mathcal{K}}^{T}: \Omega^{1}(M) \times C^{\infty}(M, \mathbb{R})\to  \Omega^{1}(M) \times C^{\infty}(M, \mathbb{R})$ is defined by
\beq
(\bs{\mathcal{K}}^{T}(\alpha, f))(X,h):= (\alpha, f) \bs{\mathcal{K}}(X,h).
\eeq
Explicitly, we have
\beq
(\bs{\mathcal{K}}^{T}(\alpha, f))(X,h)=(\alpha, f)(\bs{\mathcal{K}} X+ hY, \gamma(X)+k h)=\alpha(\bs{K} X+hY)+(\gamma(X)+kh) f.
\eeq
\begin{definition}\cite{MMPCRAS1999}
The Nijenhuis torsion $\tau(\bs{\mathcal{K}})$ of $\bs{\mathcal{K}}$ is the $C^{\infty}(M, \mathbb{R})$-bilinear map given by
\bea
\nn \tau(\bs{\mathcal{K}}) ((X,f),(Z,h))&:=&[\bs{\mathcal{K}}(X,f), \bs{\mathcal{K}}(Z,h)]- \bs{\mathcal{K}} [\bs{\mathcal{K}}(X,f), (Z,h)]- \bs{\mathcal{K}} [(X,f), \bs{\mathcal{K}}(Z,h)]\\ \nn &+& \bs{\mathcal{K}}^2 [(X,f), (Z,h)]\\
\eea
for all $(X,f), (Z,h)\in \mathfrak{X}(M)\times C^{\infty}(M, \mathbb{R})$.
\end{definition}
By analogy, we propose the new
\begin{definition}
Let $\bs{\mathcal{K}}: \mathfrak{X}(M)\times C^{\infty}(M, \mathbb{R})\to \mathfrak{X}(M)\times C^{\infty}(M, \mathbb{R})$ be the linear map \eqref{eq:3.1}. The Haantjes torsion of $\bs{\mathcal{K}}$ is the $C^{\infty}(M, \mathbb{R})$-bilinear map given by
\bea
\nn \cH(\bs{\mathcal{K}}) ((X,f),(Z,h))&:=&\bs{\mathcal{K}}^{2}\tau(\bs{\mathcal{K}})((X,f),(Z,h))+\tau(\bs{\mathcal{K}})(\bs{\mathcal{K}}(X,f),\bs{\mathcal{K}}(Z,h))\\
&-& \bs{\mathcal{K}} \big(\tau(\bs{\mathcal{K}})((X,f),\bs{\mathcal{K}}(Z,h))+\tau(\bs{\mathcal{K}})(\bs{\mathcal{K}}(X,f),(Z,h))\big)
\eea
for all $(X,f),(Z,h)\in \mathfrak{X}(M)\times C^{\infty}(M, \mathbb{R})$.
\end{definition}
\begin{definition}
We shall say that $\bs{\mathcal{K}}$ is a Haantjes (Nijenhuis) operator on $\mathfrak{X}(M)\times C^{\infty}(M, \mathbb{R})$ if its Haantjes (Nijenhuis) torsion vanishes.
\end{definition}

The following result provides a geometric characterization of the Haantjes operators. 
\begin{proposition} \label{prop:1}
\label{prop1}
$\bs{\mathcal{K}}=(\bs{K}, Y, \gamma, k)$ is a Haantjes operator on $\mathfrak{X}(M)\times C^{\infty}(M, \mathbb{R})$ if and only if the following relations hold:
\begin{equation}
\label{eqHaantjesex}
\left\lbrace\begin{array}{c}
\mathcal{H}_{\bs{K}}=Y\otimes\gamma \\
\mathcal{L}_{\bs{K}}^{\tau_{\bs{K}}}\gamma=kd\gamma\\
\mathcal{L}_{Y}^{\tau_{\bs{K}}}\bs{K}=-Y\otimes dk\\
\bs{K}^{T}dk=\mathcal{L}_{Y}^{\tau_{\bs{K}}}\gamma+kdk,
\end{array}\right.
\end{equation}
where we have used the notation
\begin{equation}
(\mathcal{L}_{\bs{K}}^{\tau_{\bs{K}}}\gamma)(X,Z)=(\bs{K}X)(\gamma(Z))-(\bs{K}Z)(\gamma(X))-\gamma(\tau_{\bs{K}}(\bs{K}X,Z))-\gamma(\tau_{\bs{K}}(X,\bs{K}Z))+\gamma(\bs{K}(\tau_{\bs{K}}([X,Z]))),
\end{equation}
\begin{equation}
(\mathcal{L}_{Y}^{\tau_{\bs{K}}}\bs{K})(X)=\bs{K}(\tau_{\bs{K}}(X,Y))-\tau_{\bs{K}}(\bs{K}X,Y)
\end{equation}
and
\begin{equation}
(\mathcal{L}_{Y}^{\tau_{\bs{K}}}\gamma)(X)=Y(\gamma(X))-\gamma(\tau_{\bs{K}}(Y,X)).
\end{equation}
Besides, system (\ref{eqHaantjesex}) is equivalent to the requirement that the operator 
\begin{equation}
\label{Ktilde}
\tilde{\bs{K}}=\bs{K}+Y\otimes dt+\frac{\partial}{\partial \tau}\otimes\gamma+k\frac{\partial}{\partial \tau}\otimes d\tau
\end{equation}
on $M\times \mathbb{R}$ is a Haantjes operator (in the sense of Definition \ref{deHaantjes}), where $\tau$ is the global coordinate on $\mathbb{R}$. Thus, $\bs{\mathcal{K}}=(\bs{K}, Y, \gamma, k)$ is a Haantjes operator on $\mathfrak{X}(M)\times C^{\infty}(M, \mathbb{R})$ if and only if $\tilde{\bs{K}}$ is a Haantjes operator on $M\times\mathbb{R}$.
\end{proposition}
The proof of Proposition \ref{prop:1}, \textit{mutatis mutandis}, closely follows the argument presented in \cite{nunes1999some,PNJGP2003} for the Nijenhuis case, and will therefore not be detailed here. In essence, it consists of replacing the Lie bracket $[\cdot, \cdot]$ of vector fields by the Nijenhuis torsion $\tau_{\bs{K}}$ of $\bs{K}$, thereby extending the construction to encompass the Haantjes torsion. It is worth remarking that the Nijenhuis torsion $\tau_{\bs{K}}$ of $\bs{K}$ is not a Lie bracket, as it does not satisfy the Jacobi identity. Nevertheless, it is a $C^{\infty}(M)$-bilinear map on the field of vector fields, which is sufficient to carry out the necessary computations leading to the stated result.
\vspace{2mm}

It is natural to propose an extension of the notion of Haantjes algebras given in Definition \ref{def:HA} to the case of Haantjes operators on  $\mathfrak{X}(M)\times C^{\infty}(M, \mathbb{R})$.
\begin{definition}
An extended Haantjes algebra on $M$ is a set $\mathscr{H}$ of Haantjes operators on $\mathfrak{X}(M)\times C^{\infty}(M, \mathbb{R})$ such that $\forall \, \bs{\mathcal{K}}_1, \bs{\mathcal{K}}_2\in \mathscr{H}$:
\begin{equation}
\label{eq:Hmodextend}
\cH\left( \lambda_1\bs{\mathcal{K}}_{1} +\lambda_2\bs{\mathcal{K}}_2 \right)((X,f),(Z,h))= \mathbf{0}
 \, , \qquad\forall\, X, Z \in \mathfrak{X}(M) \, , \quad \forall\, \lambda_1, \lambda_2, f,h  \in C^\infty(M);
\end{equation}
\begin{equation}
 \label{eq:Hringextend}
\cH(\bs{\mathcal{K}}_1 \, \bs{\mathcal{K}}_2)((X,f),(Z,h)=\mathbf{0} \, ,  \quad\forall\, X, Z \in \mathfrak{X}(M)\, , \quad \forall\, f,h  \in C^\infty(M).
\end{equation}
\end{definition}

\section{The theory of Jacobi-Haantjes manifolds}
\label{sec4}

In this section we introduce the central notion of our construction. To this aim, let $(M,\Lambda,E)$ be a Jacobi manifold, we denote by $(\Lambda, E)^{\sharp}: T^{*}M\times \mathbb{R}\to TM\times \mathbb{R}$ the vector bundle map defined by
\beq
(\Lambda, E)^{\sharp}(\alpha, f):= (\Lambda^{\sharp}(\alpha)-fE, \alpha(E)).
\eeq
where $\Lambda^{\sharp}:\Omega^{1}(M)\longrightarrow\mathfrak{X}(M)$ is the induced map given by  $\langle \beta, \Lambda^{\sharp}(\alpha)\rangle= \Lambda(\beta,\alpha)$.

\subsection{Jacobi-Haantjes structures and Haantjes chains}
\begin{definition}
\label{def:JHM}
A Jacobi-Haantjes (JH) manifold of class $m$ is a quadruple $(M, \Lambda, E, \mathscr{H})$, where $\mathscr{H}$ is an Abelian extended Haantjes algebra on $M$ of rank $m$ such that
\beq
\label{eqcompatibility}
\bs{\mathcal{K}} \circ (\Lambda, E)^{\sharp}= (\Lambda, E)^{\sharp}\circ \bs{\mathcal{K}}^{T}, \qquad \forall \bs{\mathcal{K}} \in \mathscr{H}.
\eeq
\end{definition}
\begin{remark}
The notion of Jacobi-Haantjes structures, as defined above, can be relaxed. Rather than requiring that the Haantjes torsion $\mathcal{H}(\bs{\mathcal{K}})$ identically vanishes on $M$, one can require that it vanishes on the image of $(\Lambda, E)^{\sharp}$. This broader notion, in the Haantjes scenario, parallels that of a Jacobi-Nijenhuis manifold, as proposed by Marrero et al. in \cite{MMPCRAS1999}. Our formulation of Jacobi-Haantjes manifolds in Definition \ref{def:JHM} is more similar to the Jacobi-Nijenhuis structures proposed by Petalidou et al. in \cite{PNJGP2003}, which we generalize directly.
\end{remark}

Given the extended operator $\bs{\mathcal{K}}=(\bs{K}, Y, \gamma, k)$, the compatibility condition \eqref{eqcompatibility} is equivalent to the following system of equations
\begin{equation}
\label{syscompatibility}
\left\lbrace \begin{array}{c}
\bs{K}\circ \Lambda^{\sharp}+E\otimes Y=\Lambda^{\sharp}\circ \bs{K}^{T}-Y\otimes E,\\ 
\Lambda^{\sharp}(\gamma)=-\bs{K}E-kE,\\ 
\gamma(E)=0.
\end{array} \right. 
\end{equation}
Equivalently, the relations
\begin{equation}
\label{systemcc}
\left\lbrace \begin{array}{c}
\Lambda(\bs{K}_{i}^{T}\alpha,\beta)+\alpha(Y_{i})\beta(E)=\Lambda(\alpha,\bs{K}_{i}^{T}\beta)-\beta(Y_{i})\alpha(E), \\ 
\Lambda(\gamma_{i},\alpha)=(\bs{K}_{i}^{T}\alpha)(E)-k_{i}\alpha(E), \\ 
\gamma_{i}(E)=0,
\end{array} \right.
\end{equation}
hold for all $\alpha,\beta\in\Omega^{1}(M)$.
\vspace{2mm}

By analogy with Definition \ref{def:7}, we extend the notion of Haantjes chain to the case of an extended Haantjes algebra.
\begin{definition}
Let $\mathscr{H}$ be an extended Haantjes algebra on $M$ of rank $m$. We shall say that a function $H\in C^{\infty}(M)$ generates an extended Haantjes chain $\mathscr{C}$ on $M$ of length $m$, if  there exists a distinguished basis $\{\bs{\mathcal{K}}_1,\ldots, \bs{\mathcal{K}}_m\}$ of $\mathscr{H}$ and a set of functions $H_{1},\ldots,H_{m}\in C^{\infty}(M)$ such that
 \beq \label{eqHCportuguesa}
 (\rd H_{i}, H_{i})= \bs{\mathcal{K}}_i^{T}(\rd H,H), \qquad i=1,\ldots, m.
 \eeq
The functions $H_i$ are said to be the potential functions of the chain.
\end{definition}

One of the main results of the theory of Jacobi-Haantjes manifolds is the following
\begin{theorem}
\label{themain}
Let $(M,\Lambda,E,\mathscr{H})$ be a Jacobi-Haantjes manifold of class $m$. If $H\in C^{\infty}(M)$ generates an extended Haantjes chain $\mathscr{C}$ on $M$ of length $m$,
then the potential functions $H_{1},\ldots,H_{m}$ of the chain are independent functions pairwise in involution such that $\lbrace H_{i},H\rbrace=0$. 
\end{theorem}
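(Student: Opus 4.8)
The plan is to transport the whole computation to the extended bundle $\mathfrak{X}(M)\times C^\infty(M,\mathbb{R})$ and its dual, and to run the classical symmetrization argument familiar from the $\omega\mathscr{H}$ and Poisson-Haantjes settings, with $(\Lambda,E)^\sharp$ playing the role of the Poisson bivector. Writing $\langle(\alpha,a),(X,x)\rangle:=\alpha(X)+ax$ for the natural pairing already used to define $\bs{\mathcal{K}}^T$, I would first record two structural facts. (i) The map $(\Lambda,E)^\sharp$ is \emph{skew-adjoint} for this pairing, i.e. $\langle(\alpha,a),(\Lambda,E)^\sharp(\beta,b)\rangle=-\langle(\beta,b),(\Lambda,E)^\sharp(\alpha,a)\rangle$; this is a one-line check from $(\Lambda,E)^\sharp(\alpha,a)=(\Lambda^\sharp\alpha+aE,-\alpha(E))$ and the skew-symmetry of $\Lambda$. (ii) The Jacobi bracket is recovered as a quadratic form of this pairing, $\{f,g\}=\langle \ell(f),(\Lambda,E)^\sharp\ell(g)\rangle$, for the lift $\ell(f)=(\rd f,\varepsilon f)$ appearing in the chain relation \eqref{eqHCportuguesa} (with the appropriate sign $\varepsilon=\pm1$). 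I would also record the elementary identity $X_Hf=\{f,H\}-fEH$, so that the dissipation property $X_HH_i=-H_iEH$ is \emph{equivalent} to $\{H_i,H\}=0$. With these reductions the theorem becomes the two assertions $\{H_i,H_j\}=0$ for all $i,j$ and $\{H_i,H\}=0$ for all $i$.

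For the involution, I would use the defining compatibility \eqref{eqcompatibility} to push the chain relation $\ell(H_i)=\bs{\mathcal{K}}_i^T\ell(H)$ through $(\Lambda,E)^\sharp$:
\begin{align*}
\{H_i,H_j\} &=\langle \ell(H_i),(\Lambda,E)^\sharp\ell(H_j)\rangle
=\langle \bs{\mathcal{K}}_i^T\ell(H),(\Lambda,E)^\sharp\bs{\mathcal{K}}_j^T\ell(H)\rangle\\
&=\langle \bs{\mathcal{K}}_i^T\ell(H),\bs{\mathcal{K}}_j\,(\Lambda,E)^\sharp\ell(H)\rangle
=\langle (\bs{\mathcal{K}}_i\bs{\mathcal{K}}_j)^T\ell(H),(\Lambda,E)^\sharp\ell(H)\rangle ,
\end{align*}
where the last step transposes $\bs{\mathcal{K}}_j$ back across the pairing. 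Since $\mathscr{H}$ is Abelian, $\bs{\mathcal{K}}_i\bs{\mathcal{K}}_j=\bs{\mathcal{K}}_j\bs{\mathcal{K}}_i$, so this final expression is symmetric under $i\leftrightarrow j$; but by (i) the quantity $\{H_i,H_j\}$ is skew under $i\leftrightarrow j$. Hence $\{H_i,H_j\}=\{H_j,H_i\}=-\{H_i,H_j\}$, and therefore $\{H_i,H_j\}=0$.

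The dissipation runs by the same mechanism and is in fact shorter: pairing $\ell(H_i)=\bs{\mathcal{K}}_i^T\ell(H)$ against $(\Lambda,E)^\sharp\ell(H)$, transposing $\bs{\mathcal{K}}_i$, applying \eqref{eqcompatibility}, and then using the chain relation once more gives
\[
\{H_i,H\}=\langle\ell(H),(\Lambda,E)^\sharp\ell(H_i)\rangle=-\langle\ell(H_i),(\Lambda,E)^\sharp\ell(H)\rangle=-\{H_i,H\},
\]
so $\{H_i,H\}=0$ and, by the equivalence noted above, each $H_i$ is dissipated at the same rate as $H$. Independence of the $H_i$ I would obtain exactly as in Definition \ref{def:7}: a chain of length $m$ carries a distinguished basis $\{\bs{\mathcal{K}}_1,\dots,\bs{\mathcal{K}}_m\}$ of the rank-$m$ algebra together with the defining requirement that the generated forms $\rd H_i$ be linearly independent, and this passes to the functions $H_i$.

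I expect the only genuinely delicate point to be fact (ii): pinning down the lift $\ell$ and the sign $\varepsilon$ so that $\langle\ell(f),(\Lambda,E)^\sharp\ell(g)\rangle$ reproduces the Jacobi bracket $\{f,g\}=\Lambda(\rd f,\rd g)+fEg-gEf$ of the structure $(\Lambda,E)$, and not that of $(\Lambda,-E)$, since the $E$-terms make the two easy to confuse. Once this sign bookkeeping is fixed consistently with the convention in \eqref{eqHCportuguesa}, the skew-adjointness (i) and the Abelian property do all the remaining work; notably, the vanishing of the Haantjes torsion itself is not needed here, as it enters only in guaranteeing the \emph{existence} of the chain (cf. Theorem \ref{th:LHint}).
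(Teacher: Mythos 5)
Your argument is correct, and it reaches the conclusion by a genuinely different (and more economical) route than the paper. The paper unpacks each $\bs{\mathcal{K}}_i=(\bs{K}_i,Y_i,\gamma_i,k_i)$ into components, translates the compatibility condition \eqref{eqcompatibility} into the explicit system \eqref{systemcc}, grinds through $\lbrace H_i,H\rbrace$ term by term, and then handles $\lbrace H_i,H_j\rbrace$ by invoking the closure of $\mathscr{H}$ under products (setting $\bs{\mathcal{K}}_s:=\bs{\mathcal{K}}_i\bs{\mathcal{K}}_j\in\mathscr{H}$ and reusing the identity already established for a single operator). Your version replaces all of this with the abstract symmetrization argument on the extended bundle: skew-adjointness of $(\Lambda,E)^\sharp$ plus one application of \eqref{eqcompatibility} plus commutativity. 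What this buys is twofold: the component bookkeeping disappears entirely, and you only need that the $\bs{\mathcal{K}}_i$ commute and are individually compatible, not that $\mathscr{H}$ is closed under multiplication. What the paper's route buys is that the intermediate component identities (e.g.\ $\gamma_i(E)=0$, $\gamma_i(\bs{K}_jE)=0$) are made explicit, which the authors reuse elsewhere. Your point (i) is a correct one-line check, and the delicate point (ii) you flag does resolve favourably: with the convention $\beta(\Lambda^\sharp\alpha)=\Lambda(\alpha,\beta)$ (the one under which the paper's system \eqref{systemcc} is the correct transcription of \eqref{eqcompatibility}) one finds, for the lift $\ell(f)=(\rd f,f)$ dictated by \eqref{eqHCportuguesa},
\begin{equation*}
\bigl\langle \ell(f),(\Lambda,E)^{\sharp}\ell(g)\bigr\rangle=\Lambda(\rd g,\rd f)+gEf-fEg=-\lbrace f,g\rbrace,
\end{equation*}
and the overall factor $-1$ is immaterial since only the vanishing of the bracket is at stake. (Had the opposite convention for $\Lambda^\sharp$ been in force, the pairing would reproduce the bracket of $(\Lambda,-E)$ rather than $\pm\lbrace\cdot,\cdot\rbrace$, so this verification is genuinely needed and should be written out rather than deferred.) Finally, your remark that the vanishing of the Haantjes torsion is never used matches the paper's proof, and your treatment of independence (it is imposed by the definition of the chain rather than derived) is exactly as explicit as the paper's.
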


\begin{proof}
Let $H_{1},\ldots,H_{m}$ be the potential functions of the chain $\mathscr{C}$ and $\lbrace \bs{\mathcal{K}}_{1},\ldots,\bs{\mathcal{K}}_{m}\rbrace$ be a distinguished basis of $\mathscr{H}$ such that equation (\ref{eqHCportuguesa}) is satisfied. 
As before, $\bs{\mathcal{K}}_{i}=(\bs{K}_{i},Y_{i},\gamma_{i},k_{i})$, where $\bs{K}_{i}$ is a (1,1)-tensor field on $M$, $Y_{i}\in \mathfrak{X}(M)$, $\gamma_{i}\in \Omega^{1}(M)$ and $k_{i}\in C^{\infty}(M)$. Thus, for all $(X,f)\in\mathfrak{X}(M)\times C^{\infty}(M)$ we obtain
\begin{equation}
\begin{split}
\left( \bs{\mathcal{K}}_{i}^{T}(\rd H,H)\right)(X,f)&=(\rd H,H)\left( \bs{\mathcal{K}}_{i}(X,f)\right)\\
&=(\rd H,H)(\bs{K}_{i}X+fY_{i},\gamma_{i}X+k_{i}f)\\
&=\rd H(\bs{K}_{i}X+fY_{i})+(\gamma_{i}X+k_{i}f)H\\
&=(\bs{K}_{i}^{T}\rd H+H\gamma_{i})X+(Y_{i}H+Hk_{i})f.
\end{split}
\end{equation}
Besides,
\begin{equation}
(\rd H_{i},H_{i})(X,f)=\rd H_{i}(X)+H_{i}f,
\end{equation}
therefore
\begin{equation}
\label{dHi,Hi}
\rd H_{i}=\bs{K}_{i}^{T}\rd H+H\gamma_{i}\quad\textit{and}\quad H_{i}=Y_{i}H+Hk_{i}.
\end{equation}

First, we consider the commutator 
\begin{equation}
\begin{split}
\lbrace H_{i},H\rbrace &=\Lambda(\rd H_{i},\rd H)+H_{i}EH-HEH_{i}\\
&=\Lambda(\bs{K}_{i}^{T}\rd H+H\gamma_{i},\rd H)+H_{i}EH-HEH_{i}\\
&=\Lambda(\bs{K}_{i}^{T}\rd H,\rd H)+H\Lambda(\gamma_{i},\rd H)+H_{i}EH-HEH_{i}\\
&=\Lambda(\bs{K}_{i}^{T}\rd H,\rd H)+H((\bs{K}_{i}^{T}\rd H)(E)-k_{i}(\rd H)(E))+H_{i}EH-HEH_{i}\\
&=\Lambda(\bs{K}_{i}^{T}\rd H,\rd H)+H(\bs{K}_{i}^{T}\rd H)(E)-Hk_{i}EH+H_{i}EH-HEH_{i}.
\end{split}
\end{equation}
Taking into account that
\begin{equation}
(\bs{K}_{i}^{T}\rd H)(E)=(\rd H_{i}-H\gamma_{i})(E)=\rd H_{i}-H\gamma_{i}(E)=EH_{i};
\end{equation}
we obtain
\begin{equation}
\begin{split}
\lbrace H_{i},H\rbrace &=\Lambda(\bs{K}_{i}^{T}\rd H,\rd H)+HEH_{i}-Hk_{i}EH+H_{i}EH-HEH_{i}\\
&=\Lambda(\bs{K}_{i}^{T}\rd H,\rd H)-Hk_{i}EH+H_{i}EH\\
&=\Lambda(\bs{K}_{i}^{T}\rd H,\rd H)-(H_{i}-Y_{i}H)EH+H_{i}EH\\
&=\Lambda(\bs{K}_{i}^{T}\rd H,\rd H)+Y_{i}HEH.
\end{split}
\end{equation}
From the first equation of system (\ref{systemcc}) we deduce
\begin{equation}
\begin{split}
\Lambda(\bs{K}_{i}^{T}\rd H,\rd H)+Y_{i}HEH&=\Lambda(\rd H,\bs{K}_{i}^{T}\rd H)-Y_{i}HEH\\
&=-\Lambda(\bs{K}_{i}^{T}\rd H,\rd H)-Y_{i}HEH,
\end{split}
\end{equation}
which implies that $$\Lambda(\bs{K}_{i}^{T}\rd H,\rd H)+Y_{i}HEH=0.$$  Consequently,
\begin{equation}
\lbrace H_{i},H\rbrace=0.
\end{equation}
Let us prove now that $\lbrace H_{i},H_{j}\rbrace=0$. To this aim, observe that
\begin{equation}
\begin{split}
\lbrace H_{i},H_{j}\rbrace &=\Lambda(\rd H_{i},\rd H_{j})+H_{i}EH_{j}-H_{j}EH_{i}\\
&=\Lambda(\bs{K}_{i}^{T}\rd H+H\gamma_{i},\bs{K}_{j}^{T}\rd H+H\gamma_{j})+H_{i}EH_{j}-H_{j}EH_{i}\\
&=\Lambda(\bs{K}_{i}^{T}\rd H,\bs{K}_{j}^{T}\rd H)+H\Lambda(\bs{K}_{i}^{T}\rd H,\gamma_{j})+H\Lambda(\gamma_{i},\bs{K}_{j}^{T}\rd H)\\
&\quad+H^{2}\Lambda(\gamma_{i},\gamma_{j})+H_{i}EH_{j}-H_{j}EH_{i}\\
&=\Lambda(\bs{K}_{i}^{T}\rd H,\bs{K}_{j}^{T}\rd H)+H(k_{j}(\bs{K}_{i}^{T}\rd H)(E)-(\bs{K}_{j}^{T}\bs{K}_{i}^{T}\rd H)(E))\\
&\quad+H((\bs{K}_{i}^{T}\bs{K}_{j}^{T}\rd H)(E)-k_{i}(\bs{K}_{j}^{T}\rd H)(E))+H^{2}((\bs{K}_{i}^{T}\gamma_{j})(E)\\
&\quad-k_{i}\gamma_{j}(E))+H_{i}EH_{j}-H_{j}EH_{i}\\
&=\Lambda(\bs{K}_{i}^{T}\rd H,\bs{K}_{j}^{T}\rd H)+H(k_{j}EH_{i}-k_{i}EH_{j}+\rd H(\bs{K}_{j}\bs{K}_{i}E-\bs{K}_{i}\bs{K}_{j}E))\\
&\quad+H^{2}\gamma_{j}(\bs{K}_{i}E)+H_{i}EH_{j}-H_{j}EH_{i}\\
&=\Lambda(\bs{K}_{i}^{T}\rd H,\bs{K}_{j}^{T}\rd H)+H_{j}EH_{i}-Y_{j}HRH_{i}-H_{i}EH_{j}+Y_{i}HEH_{j}\\
&\quad+H\rd H(\bs{K}_{j}\bs{K}_{i}E-\bs{K}_{i}\bs{K}_{j}E))+H^{2}\gamma_{j}(\bs{K}_{i}E)+H_{i}EH_{j}-H_{j}EH_{i}\\
&=\Lambda(\bs{K}_{i}^{T}\rd H,\bs{K}_{j}^{T}\rd H)-Y_{j}HRH_{i}+Y_{i}HEH_{j}\\
&\quad+H\rd H(\bs{K}_{j}\bs{K}_{i}E-\bs{K}_{i}\bs{K}_{j}E))+H^{2}\gamma_{j}(\bs{K}_{i}E).
\end{split}
\end{equation}
Since $\mathscr{H}$ is an Abelian algebra, we necessarily have $\bs{\mathcal{K}}_{i}\bs{\mathcal{K}}_{j}=\bs{\mathcal{K}}_{j}\bs{\mathcal{K}}_{i}$. For all $(X,f)\in\mathfrak{X}(M)\times C^{\infty}(M)$ we obtain
\begin{equation}
\bs{\mathcal{K}}_{i}\bs{\mathcal{K}}_{j}(X,f)=(\bs{K}_{i}\bs{K}_{j}X+Y_{i}\gamma_{j}X+f\bs{K}_{i}Y_{j}+fk_{j}Y_{i},(\bs{K}_{j}^{T}\gamma_{i})(X)+k_{i}\gamma_{j}X+f\gamma_{i}Y_{j}+fk_{i}k_{j}).
\end{equation}
Explicitly, 
\begin{equation}
\bs{\mathcal{K}}_{i}\bs{\mathcal{K}}_{j}=(\bs{K}_{i}\bs{K}_{j}+Y_{i}\otimes\gamma_{j},\bs{K}_{i}Y_{j}+k_{j}Y_{i},\bs{K}_{j}^{T}\gamma_{i}+k_{i}\gamma_{j},\gamma_{i}Y_{j}+k_{i}k_{j}),
\end{equation}
where $Y_{i}\otimes\gamma_{j}$ is the (1,1)-tensor field on $M$ defined by $(Y_{i}\otimes\gamma_{j})(X):=Y_{i}\gamma_{j}X$. By taking $X=E$ we obtain
\begin{equation} \label{eq:4.16}
\left\lbrace \begin{array}{c}
\bs{K}_{i}\bs{K}_{j}E=\bs{K}_{j}\bs{K}_{i}E,\\
\gamma_{i}(\bs{K}_{j}E)=\gamma_{j}(\bs{K}_{i}E),\\
\gamma_{i}Y_{j}=\gamma_{j}Y_{i}.
\end{array} \right. 
\end{equation}
Now, as $\lbrace H_{i},H_{j}\rbrace=-\lbrace H_{j},H_{i}\rbrace$, we have that  $\gamma_{i}(\bs{K}_{j}E)=-\gamma_{j}(\bs{K}_{i}E)$. Combining this relation with the second one of system \eqref{eq:4.16}, we deduce that $\gamma_{i}(\bs{K}_{j}E)=0$. Consequently, the expression of the commutator $\lbrace H_{i},H_{j}\rbrace$ reduces to
\begin{equation}
\lbrace H_{i},H_{j}\rbrace=\Lambda(\bs{K}_{i}^{T}\rd H,\bs{K}_{j}^{T}\rd H)-Y_{j}HRH_{i}+Y_{i}HEH_{j}.
\end{equation}

On the other hand, $\bs{\mathcal{K}}_{i}\bs{\mathcal{K}}_{j}=: \bs{\mathcal{K}}_{s}$ is an element of the algebra $\mathscr{H}$. Thus,
\begin{equation}
\Lambda(\bs{K}_{s}^{T}\rd H,\rd H)+Y_{s}HEH=0.
\end{equation}
Being $\bs{K}_{s}=\bs{K}_{i}\bs{K}_{j}+Y_{i}\otimes\gamma_{j}$ and $Y_{s}=\bs{K}_{i}Y_{j}+k_{j}Y_{i}$, we can easily prove that
\begin{equation}
\Lambda(\bs{K}_{s}^{T}\rd H,\rd H)+Y_{s}HEH=\Lambda(\bs{K}_{i}^{T}\rd H,\bs{K}_{j}^{T}\rd H)-Y_{j}HRH_{i}+Y_{i}HEH_{j},
\end{equation}
i.e.,
\begin{equation}
\lbrace H_{i},H_{j}\rbrace=\Lambda(\bs{K}_{s}^{T}\rd H,\rd H)+Y_{s}HEH=0.
\end{equation}
\end{proof}

\subsection{From Jacobi-Haantjes to Poisson-Haantjes structures}
There exists a canonical procedure which allows one to associate, with each Jacobi manifold $(M, \Lambda, E)$, a homogeneous Poisson manifold. In fact, let us introduce the manifold $\tilde{M}:= M\times \mathbb{R}$ and the bivector on $\tilde{M}$ defined by
\[
\tilde{P}:=e^{-\tau}(\Lambda+ \partial_\tau \wedge E),
\]
where $\tau$ is the global coordinate on $\mathbb{R}$. By construction, $\tilde{P}$ has a vanishing Schouten-Nijenhuis bracket and is therefore a Poisson bivector. Consequently, $(\tilde{M},\tilde{P})$ is a Poisson manifold. 

\begin{remark}
Due to Proposition \ref{prop1}, $\bs{\mathcal{K}}=(\bs{K}, Y, \gamma, k)$ is a Haantjes operator on $\mathfrak{X}(M)\times C^{\infty}(M, \mathbb{R})$ if and only if $\tilde{\bs{K}}=\bs{K}+Y\otimes d\tau+\frac{\partial}{\partial \tau}\otimes\gamma+k\frac{\partial}{\partial \tau}\otimes d\tau$ is a Haantjes operator on $M\times\mathbb{R}$. In addition, coherently with the case of Jacobi-Nijenhuis structures \cite{PNJGP2003}, one can show that $\mathcal{\bs{K}}$ is compatible with the Jacobi structure $(\Lambda,E)$ on $M$ if and only if $\tilde{\bs{K}}$ satisfies $\tilde{\bs{K}}\circ\tilde{P}^{\sharp}=\tilde{P}^{\sharp}\circ\tilde{\bs{K}}^{T}$, i.e., $\tilde{\bs{K}}$ is compatible with the Poisson structure $\tilde{P}$ on $\tilde{M}$. Thus, an extended Haantjes algebra on $(M,\Lambda,E)$ of operators compatible with the Jacobi structure, leads to a Haantjes algebra on $(\tilde{M},\tilde{P})$ of operators compatible with the Poisson structure.  Equivalently, we can state that each Jacobi-Haantjes manifold has associated a Poisson-Haantjes manifold (see \cite{T2018TMP} for a study of Poisson-Haantjes manifolds). 
\end{remark}

We can go even further: starting from a dissipative Hamiltonian system on a Jacobi manifold,  we are able to lift the dynamics to a conservative Hamiltonian system on a Poisson manifold, such that the latter dynamics projects onto the original dissipative dynamics. Specifically, let $H\in C^{\infty}(M)$ and consider the Hamiltonian system $(M,\Lambda,E,H)$, its dynamics is determined by the associated Hamiltonian vector field $X_{H}=\Lambda^{\sharp}(dH)-HE.$ By defining
\begin{equation}
\tilde{H}:=e^{\tau}H,
\end{equation}
the triple $(\tilde{M},\tilde{P},\tilde{H})$ is a Hamiltonian system with dynamics determined by the Hamiltonian vector field
\begin{equation}
X_{\tilde{H}}=X_{H}+EH\frac{\partial}{\partial \tau}.
\end{equation}
In fact, we have
\begin{equation}
\begin{split}
\tilde{P}^{\sharp}(d\tilde{H})&=e^{-\tau}(\Lambda+ \partial_\tau \wedge E)^{\sharp}(e^{\tau}Hd\tau+e^{\tau}dH)\\
&=\Lambda^{\sharp}dH-HE+EH\frac{\partial}{\partial \tau}\\
&=X_{H}+EH\frac{\partial}{\partial \tau}.
\end{split}
\end{equation}
We recover the dissipative Hamiltonian dynamics on $M$ by projecting the dynamics on $\tilde{M}$ onto the immersed submanifold determined by $\tau=0$, indeed, $M\times\lbrace 0\rbrace\equiv M$, $\tilde{H}\vert_{\tau=0}\equiv H$ and $X_{\tilde{H}}\vert_{\tau=0}\equiv X_{H}$.
\vspace{2mm}

A crucial fact is that extended Haantjes chains on $M$ lift naturally to Haantjes chains on $\tilde{M}$. Let us suppose that $H$ generates an extended Haantjes chain of length $m$ on $M$ with potential functions $H_{1},\ldots,H_{m}$. Then, $\tilde{H}$ generates a Haantjes chain of the same length $m$ on $\tilde{M}$, with potential functions $\tilde{H}_{1}=e^{\tau}H_{1},\ldots ,\tilde{H}_{m}=e^{\tau}H_{m}$. Indeed, let $\{\bs{\mathcal{K}}_1,\ldots, \bs{\mathcal{K}}_m\}$ be a distinguished basis of the extended Haantjes chain generated by $H$, such that
\begin{equation}
 (\rd H_{i}, H_{i})= \bs{\mathcal{K}}_i^{T}(\rd H,H),
\end{equation}
or equivalently
\begin{equation}
\rd H_{i}=\bs{K}_{i}^{T}\rd H+H\gamma_{i}\quad\textit{and}\quad H_{i}=Y_{i}H+Hk_{i}.
\end{equation}
Then, we have 
\begin{equation}
\begin{split}
\tilde{\bs{K}}_{i}^{T}d\tilde{H}&= \tilde{\bs{K}}^{T}(e^{\tau}Hd\tau+e^{\tau}dH)\\
&=e^{\tau}(\bs{K}_{i}^{T}dH+Y_{i}Hd\tau+H\gamma_{i}+k_{i}Hd\tau)\\
&=e^{\tau}(dH_{i}+H_{i}d\tau)\\
&=d\tilde{H}_{i}.
\end{split}
\end{equation}
By using our previous construction we can combine Theorem \ref{themain} and Lemma 1 in \cite{T2018TMP}. We can summarize the discussion in the following
\begin{remark}
\label{remarkJHPH}
Let $(M,\Lambda,E,H)$ be a dissipative Hamiltonian system on $M$.
If $H\in C^{\infty}(M)$ generates an extended Haantjes chain of length $m$, endowing $M$ with a Jacobi-Haantjes structure, then the dissipated quantities $H_{1},\ldots,H_{m}$, which are the potential functions of the chain, lift to conserved quantities of the corresponding lifted conserved Hamiltonian system $(\tilde{M},\tilde{P},\tilde{H})$. These lifted quantities become the potential functions of the Haantjes chain generated by $\tilde{H}$, which in turn endows $\tilde{M}$ with the canonical Poisson-Haantjes structure naturally induced from the Jacobi-Haantjes structure on $M$.
\end{remark}

\section{Contact Hamiltonian systems and Jacobi-Haantjes structures}
\label{sec5}

In this section we study Jacobi-Haantjes structures under the framework of contact manifolds. The relation between Jacobi-Haantjes structures and the integrability of contact Hamiltonian systems is a main topic of study.

\subsection{Contact-Haantjes manifolds}
Let $(M,\theta)$ be a contact manifold of dimension $2n+1$, and let $(\Lambda,E)$ be the Jacobi structure on $M$ defined by $\theta$, namely,
\begin{equation}
\Lambda(\alpha,\beta)=\rd \theta(\sharp(\alpha),\sharp(\beta))\quad\text{and}\quad E=R=\sharp(\theta),
\end{equation}
where $\sharp$ is the inverse of the map $\flat:\mathfrak{X}(M)\longrightarrow\Omega^{1}(M)$ defined by $\flat(X)=\iota_{X} \rd \theta +(\iota_{X} \theta) \theta.$ The vector bundle map $(\Lambda, E)^{\sharp}: T^{*}M\times \mathbb{R}\to TM\times \mathbb{R}$ has the form
\begin{equation}
(\Lambda, E)^{\sharp}(\alpha,f)=(\sharp(\alpha)-\alpha(R)R-fR,\alpha(R)).
\end{equation}

\vspace{2mm}

In order to specialize our study to Hamiltonian dynamics on contact manifolds, we propose the following
\begin{definition}
\label{decH}
A contact-Haantjes (cH) manifold of class $m$ is a triple $(M,\theta,\mathscr{H})$, where $(M,\theta)$ is a contact manifold and $\mathscr{H}$ is an Abelian extended Haantjes algebra of rank $m$ of operators on $M$ of the form $\bs{\mathcal{K}}=(\bs{K}, Y, \gamma, k)$,  such that the following compatibility relations
\begin{equation}
\label{eqcH}
\left\lbrace \begin{array}{c}
\bs{K}(\sharp(\alpha))-\alpha(R)\bs{K}R+\alpha(R)Y=\sharp(\bs{K}^{T}\alpha)-\alpha(\bs{K}R)R-\alpha(Y)R,\\
\sharp(\gamma)=kR-\bs{K}R,\\
\gamma(R)=0,
\end{array} \right. 
\end{equation}
hold for all $\alpha\in\Omega^{1}(M)$.
\end{definition}
In fact, eqs. \eqref{eqcH} express the compatibility condition \eqref{eqcompatibility} between the Jacobi structure $(M,\Lambda,E)$ and the Abelian extended Haantjes algebra $\mathscr{H}$.
\vspace{2mm}

Clearly, every contact-Haantjes manifold is a Jacobi-Haantjes manifold with the Jacobi structure naturally induced by the contact form. Definition \ref{decH} is introduced with the aim of facilitating the practical use of Jacobi-Haantjes structures in the study of dissipative systems modeled as Hamiltonian systems on contact manifolds. 

\vspace{2mm}
System \eqref{eqcH} may appear rather complicated; however, it can be significantly simplified if we consider a suitable class of operators. Specifically, if we take $\bs{\mathcal{K}}=(\bs{K},Y,0,k)$ with $\bs{K}$ diagonal (hence, a Haantjes operator) and $\gamma=0$, then system (\ref{eqcH}) reduces to 
\beq
\bs{K}R=kR
\eeq
and
\begin{equation}
\label{eqreduced}
\bs{K}(\sharp(\alpha))+\alpha(R)Y=\sharp(\bs{K}^{T}\alpha)-\alpha(Y)R \quad \forall\alpha\in\Omega^{1}(M).
\end{equation}
In addition, for an operator $\bs{\mathcal{K}}$ of this class, and taking into account that $\mathcal{H}_{\bs{K}}=0$, system (\ref{eqHaantjesex}), which ensures its Haantjes property, reduces to
\begin{equation}
\label{eqreducedHaantjesex}
\begin{cases} \hspace{2mm}
\mathcal{L}_{Y}^{\tau_{\bs{K}}}\bs{K}=-Y\otimes dk, \\ \hspace{2mm}
\bs{K}^{T}dk=kdk.
\end{cases}
\end{equation}

\begin{remark}
It is worth mentioning that the kind of operators considered in the previous analysis is really relevant for our study, in fact, it includes the operators forming the natural basis of extended Haantjes algebras defining contact-Haantjes structures for completely integrable contact Hamiltonian systems (see theorem \ref{theo:LH}).
\end{remark}

\subsection{Contact-Haantjes structures and the integrability of contact Hamiltonian systems}
The notion of complete integrability for contact Hamiltonian systems has been proposed in \cite{KT2010contact, P1990TAMS,Jovanovic2015contact}. In \cite{M2003,M2005,M2014},  the existence of action--angle variables for the dynamics generated by the Reeb vector field on contact manifolds has been proved. In \cite{CLEL2025homogeneous,CLLL2023pr}, a wide extension of the classical Liouville--Arnold theorem \cite{Arnold1978} in contact geometry, and the existence of  action-angle type coordinates has been established (for related results, see also \cite{Jovanovic2012non,Zung2018conceptual}). It is worth mentioning that in \cite{B2011SIGMA}, a notion of complete integrability for contact Hamiltonian systems was proposed under the restrictive hypothesis that the Hamiltonian function be preserved by the Reeb vector field.   Following \cite{CLEL2025homogeneous,CLLL2023pr}, we adopt a more general notion of integrability, also valid for dissipative systems.
\begin{definition}
\label{def:20}
A contact Hamiltonian system $(M,\theta,H)$ on a (co-oriented) contact manifold $M$, with $dim(M)=2n+1$, is said to be completely integrable if there exist $n+1$ functions $H_{1},\ldots,H_{n+1}$ (including $H$) in involution with respect to the bracket $\{\cdot,\cdot\}_{\theta}$, called integrals, such that $\rank\, \langle \rd H_{1},\ldots,\rd H_{n+1}\rangle \geq n$. 
\end{definition} 

As one of the main results of our theory, we propose a theorem which establishes the complete integrability of a contact Hamiltonian system in the geometric framework of contact-Haantjes structures. This result is analogous to the Liouville--Haantjes theorem of the theory of symplectic-Haantjes manifolds (theorem 39 in \cite{TT2022AMPA}). In order to formulate it, we premise the concept of nondegenerate contact Hamiltonian system.
\begin{definition}
A completely integrable contact Hamiltonian system $(M,\theta,H)$ on a contact manifold $M$, with $\dim M=2n+1$ is said to be nondegenerate with respect to a set of action--angle variables $\{(J_k, \phi_k, Z)\}$ on $M$ if 
\beq
det \bigg( \frac{\partial \nu_k}{\partial J_i} \bigg)= \det \bigg( \frac{\partial^2 H}{\partial J_i \partial J_k} \bigg)\neq 0,
\eeq
where $\nu_k(J):= \frac{\partial H}{\partial J_i}$ are the associated frequencies of the system.
\end{definition}

We can now state the central result of this section.
\begin{theorem}[Liouville--Haantjes theorem for completely integrable contact Hamiltonian systems]
\label{theo:LH}
Let $(M,\theta)$ be a contact manifold with $dim(M)=2n+1$.
Assume that $(M,\theta,\mathscr{H})$ is a contact-Haantjes manifold of class $n+1$. If $H\in C^{\infty}(M)$ generates an extended Haantjes chain of length $n+1$, then the contact Hamiltonian system $(M,\theta,H)$ is completely integrable.

Conversely, for every completely integrable and non-degenerate contact Hamiltonian system $(M,\theta,H)$ there exists an Abelian extended Haantjes algebra $\mathscr{H}$ of rank $n+1$ such that $(M,\theta,\mathscr{H})$ is a contact-Haantjes manifold and $H$ generates an extended Haantjes chain of length $n+1$. 
\end{theorem}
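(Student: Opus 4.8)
The plan is to prove the two implications separately, exploiting Theorem \ref{themain} for the forward direction and an explicit construction of the operators $\bs{\mathcal{K}}_i$ for the converse. For the forward implication, assume $(M,\Lambda,E,\mathscr{H})$ is an EJH manifold of class at least $n$ and that $H$ generates an extended Haantjes chain of length at least $n$, producing potential functions $H_1,\ldots,H_n$ (together with $H$ itself). By Theorem \ref{themain}, these functions are independent and in involution with respect to the Jacobi bracket. Since $(\Lambda,E)$ is the Jacobi structure induced by the contact form $\theta$, the Jacobi bracket coincides with the contact bracket $\{\cdot,\cdot\}_\theta$, so the $H_i$ are automatically in involution in the contact sense. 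The remaining task is to verify the rank condition $\rank\langle \rd H_1,\ldots,\rd H_{n+1}\rangle \geq n$ of Definition \ref{def:20}: this follows from the assumed independence of the potential functions together with the requirement (built into the chain definition via linear independence of the $\rd H_i$) that the codistribution spanned by them has rank at least $n$. Thus $(M,\theta,H)$ is completely integrable.

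For the converse, I would start from a completely integrable, non-degenerate contact Hamiltonian system and invoke the action-angle machinery from \cite{CLEL2025homogeneous,CLLL2023pr}, which furnishes coordinates $(J_1,\ldots,J_n,\phi_1,\ldots,\phi_n,Z)$ in which the dynamics and the Jacobi structure take a normal form. The key is that in these coordinates the integrals $H_1,\ldots,H_n$ become (up to the conformal factor intrinsic to the Jacobi/contact setting) the action variables, and the Hamiltonian $H$ depends on the $J_k$ in a way governed by the nondegeneracy determinant $\det(\partial^2 H/\partial J_i\partial J_k)\neq 0$. Using the action-angle coordinates I would define, for each $i$, an extended operator $\bs{\mathcal{K}}_i=(\bs{K}_i,Y_i,\gamma_i,k_i)$ whose transpose sends $(\rd H,H)$ precisely to $(\rd H_i,H_i)$, in accordance with \eqref{eqHCportuguesa} and the component equations \eqref{dHi,Hi}. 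The natural candidates are built from eigenprojectors adapted to the foliation by the level sets of the actions, in direct analogy with the $\omega\mathscr{H}$ construction of Theorem 39 in \cite{TT2022AMPA}; the nondegeneracy hypothesis is exactly what guarantees these projectors are well-defined and that the resulting chain has the required length.

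The main obstacle, and the step deserving the most care, is verifying that the operators $\bs{\mathcal{K}}_i$ so constructed genuinely satisfy all three requirements simultaneously: that each $\bs{\mathcal{K}}_i$ is an extended Haantjes operator (its extended Haantjes torsion vanishes), that the family is Abelian, and that each is compatible with the Jacobi structure in the sense of \eqref{eqcompatibility}, equivalently the system \eqref{systemcc}. In the action-angle chart the Jacobi structure $(\Lambda,E)$ has a constant-coefficient-like normal form, so both the vanishing of the Haantjes torsion and the compatibility equations \eqref{systemcc} should reduce to algebraic identities that hold because the operators are diagonal (hence Nijenhuis, a fortiori Haantjes) in these coordinates and their off-diagonal data $(Y_i,\gamma_i,k_i)$ are tuned to respect $E=\partial/\partial Z$. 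I would therefore carry out the verification chart-by-chart, check \eqref{systemcc} componentwise against the explicit Darboux-type expressions for $\Lambda$ and $E$, and confirm that the commutativity relations \eqref{eq:4.16} are met. The delicate point is ensuring the extra components $\gamma_i,k_i$ do not spoil the Haantjes condition when the $\bs{K}_i$ alone would already be Haantjes; addressing this requires checking that the extended torsion, which mixes the tensor and the $(Y,\gamma,k)$ data, vanishes on the whole of $\mathfrak{X}(M)\times C^\infty(M)$ rather than merely on the tensorial part.
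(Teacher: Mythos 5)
Your proposal is correct and follows essentially the same route as the paper: the forward implication is obtained as a corollary of Theorem \ref{themain} together with the rank condition in Definition \ref{def:20}, and the converse is proved by passing to the action-angle coordinates of \cite{CLLL2023pr,CLEL2025homogeneous} and building diagonal extended operators $\bs{\mathcal{K}}_j$ (with the $Y_j$-component tuned so that $\bs{\mathcal{K}}_j^{T}(\rd H,H)=(\rd H_j,H_j)$ and nondegeneracy ensuring well-definedness). The verification you flag as the delicate step --- that the extended Haantjes torsions vanish, the algebra is Abelian, and the compatibility \eqref{eqcompatibility} holds --- is treated no more explicitly in the paper, which simply asserts these properties for its diagonal operators.
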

\begin{proof}
Since every contact-Haantjes manifold is a Jacobi-Haantjes manifold, then the first statement is a corollary of Theorem \ref{themain}. Let us prove the converse statement. Given a completely integrable contact Hamiltonian system $(M,\theta,H)$ admitting $\big(H_{0}=H,H_{1},\ldots,H_{n}\big)$ integrals pairwise in involution such that $\rank \,\langle \rd H_{0},\rd H_{1},\ldots,\rd H_{n}\rangle=n+1,$ according to \cite{CLLL2023pr} there exist coordinates $(\phi^{1},\ldots,\phi^{n},J_{1},\ldots,J_{n},Z)$, where $(J_{1},\ldots,J_{n})$ are said to be action coordinates, while $(\phi^{1},\ldots,\phi^{n})$ are angle coordinates, such that the contact Hamilton equations of motion for the Hamiltonian functions $H_{j}$ are
\begin{equation}
\dot{\phi^{i}}=N^{i}_{j}, \hspace{1cm}
\dot{J_{i}}=0 , \hspace{1cm}
\dot{z}=N^{0}_{j}, \qquad  j=1, \ldots, n+1.
\end{equation}
Here $N^{0}_{j},N^{1}_{j},\ldots,N^{n}_{j}$ are functions on $M$ depending only on the coordinates $(J_{1},\ldots,J_{n})$. This is equivalent to saying that $H_{0}=H,H_{1},\ldots,H_{n}$ depend only on $(J_{1},\ldots,J_{n})$.

As a basis for the extended Haantjes algebra we wish to construct, we can consider the basis given by the \textit{extended diagonal operators}
$\lbrace\bs{\mathcal{K}}_{0},\bs{\mathcal{K}}_{1},\ldots,\bs{\mathcal{K}}_{n}\rbrace$, where $\bs{\mathcal{K}}_{0}$ is the extended identity operator, locally expressed by
\begin{equation}
\bs{\mathcal{K}}_{0}=\left(\sum_{i=1}^{n}\left(\frac{\partial}{\partial \phi^{i}}\otimes \rd \phi^{i}+\frac{\partial}{\partial J_{i}}\otimes \rd J_{i}\right)+\frac{\partial}{\partial Z}\otimes \rd Z,0,0,1\right),
\end{equation}
whereas the remaining operators read
\begin{equation}
\bs{\mathcal{K}}_{j}=\left(\sum_{i=1}^{n}\frac{ \nu_i^{j}(J)}{\nu_i (J)}\left(\frac{\partial}{\partial \phi^{i}}\otimes \rd \phi^{i}+\frac{\partial}{\partial J_{i}}\otimes \rd J_{i}\right),\sum_{i=1}^{n}\left(\frac{H_{j}}{\nu_i(J)}\right)\frac{\partial}{\partial J_{i}},0,0\right),
\end{equation}
for $j=1,\ldots,n$, which are well-defined due to the nondegeneracy assumption. These operators are Haantjes operators, indeed, it is easy to verify that system (\ref{eqreducedHaantjesex}) is fulfilled (note that it suffices to check that $\mathcal{L}_{Y}^{\tau_{\bs{K}}}\bs{K}=0$). Besides, $\rank \,\langle \rd H_{0},\rd H_{1},\ldots,\rd H_{n}\rangle=n+1$, so they define an Abelian extended Haantjes algebra $\mathscr{H}$ of rank $n+1$. It is easy to see that the operators in $\mathscr{H}$ satisfy $\gamma_{j}=0$, $\bs{K}_{j}R=k_{j}R$ and equation (\ref{eqreduced}), therefore they satisfy system (\ref{eqcH}); in turn, $(M,\theta,\mathscr{H})$ is a contact-Haantjes manifold and $H$ generates and extended Haantjes chain $\mathscr{C}$ with potential functions $H_{0}=H,H_{1},\ldots,H_{n}$. 
\end{proof}
\begin{example}
Consider the contact Hamiltonian system with $n=1$ and Hamiltonian function $H(q,p,z)=p-z$. It is shown in \cite{CLLL2023pr,CLEL2025homogeneous}  that this system is completely integrable. Let us consider the Haantjes operators
\begin{equation}
\bs{\mathcal{K}}_{1}=\bigg(\frac{\partial}{\partial q}\otimes dq+\frac{\partial}{\partial p}\otimes dp+\frac{\partial}{\partial z}\otimes dz,0,0,1 \bigg)
\end{equation}
and
\begin{equation}
\bs{\mathcal{K}}_{2}=\bigg(\frac{\partial}{\partial q}\otimes dq+\frac{\partial}{\partial p}\otimes dp,p\frac{\partial}{\partial p},0,0\bigg).
\end{equation}
We have that $\mathscr{H}=\langle \bs{\mathcal{K}}_{1},\bs{\mathcal{K}}_{2}\rangle$ is an Abelian extended Haantjes algebra of rank $2$, $(M,\theta,\mathscr{H})$ is a contact-Haantjes manifold, and $H$ generates an extended Haantjes chain $\mathscr{C}$ with the distinguished basis $\lbrace \bs{\mathcal{K}}_{1},\bs{\mathcal{K}}_{2} \rbrace$ and potential functions $H_{1}=H,H_{2}(q,p,z)=p$, which are independent integrals pairwise in involution (as shown in \cite{CLLL2023pr,CLEL2025homogeneous}).
\end{example}

\section{A theory of separation of variables for dissipative systems}
\label{sec6}

In this section, we develop a novel approach to the problem of constructing separation variables for Hamiltonian dissipative systems. 

\subsection{Darboux coordinates in contact geometry and its symplectization}

As is well-known, given a contact manifold $(M,\theta)$, there exists associated a symplectic manifold \cite{ibanez1997co}: in fact, the Poisson tensor $\tilde{P}$ on $\tilde{M}=M\times\mathbb{R}$ induces naturally the symplectic form
\begin{equation} \label{eq:omegaM}
\omega=e^{\tau}(d\theta+d\tau\wedge \theta).
\end{equation}
We start our analysis with the following
\begin{lemma}
The local coordinates $(q^{i},p_{i},z)$ are canonical for the contact form $\theta$ if and only if the local coordinates
\begin{equation}
\label{coordinateT}
\left\lbrace\begin{array}{c}
Q^{i}=e^{\tau}q^{i},\\
P_{i}=p_{i},\\
T=\tau,\\
P_{T}=e^{\tau}(z-\displaystyle{\sum_{i=1}^{n}}q^{i}p_{i}),
\end{array}\right.
\end{equation}
are canonical for the symplectic form \eqref{eq:omegaM}.
\end{lemma}
\begin{proof}
The coordinates $(q^{i},p_{i},z)$ are canonical for the contact form $\theta$ if and only if its local expression is $\theta=dz-p_{i}dq^{i}$. In the same coordinates, $\omega$ takes the form
\begin{equation}
\omega= e^{\tau}\Big(\sum_{i=1}^{n}dq^{i}\wedge dp_{i}+d\tau\wedge dz-\sum_{i=1}^{n}p_{i}d\tau\wedge dq^{i}\Big).
\end{equation}
In the new coordinates defined by relations \eqref{coordinateT}, we obtain
\begin{equation}
\begin{split}
\sum_{i=1}^{n}dQ^{i}\wedge dP_{i}+dT\wedge dP_{T}&=\sum_{i=1}^{n}\left(e^{\tau}dq^{i}\wedge dp_{i}+e^{\tau}d\tau\wedge dp_{i}+d\tau\wedge(e^{\tau}dz-e^{\tau}q^{i}dp_{i}-p_{i}dq^{i})\right)\\
&=e^{\tau}\Big(\sum_{i=1}^{n}dq^{i}\wedge dp_{i}+d\tau\wedge dz-\sum_{i=1}^{n}p_{i}d\tau\wedge dq^{i}\Big).
\end{split}
\end{equation}
\end{proof}
\noi The inverse transformation of transformation (\ref{coordinateT}) reads
\begin{equation}
\label{coordinateinvT}
\left\lbrace\begin{array}{c}
q^{i}=e^{-T}Q^{i},\\
p_{i}=P_{i},\\
\tau=T,\\
z=e^{-T}\Big(P_{T}+\displaystyle{\sum_{i=1}^{n}}Q^{i}P_{i}\Big). 
\end{array}\right.
\end{equation}
As proved in \cite{TT2022AMPA}, given a symplectic-Haantjes manifold $(M,\omega,\mathscr{H})$, there exist canonical coordinates (Darboux coordinates) for $\omega$ in which all the Haantjes operators in $\mathscr{H}$ take simultaneously a block-diagonal form. Due to their twofold role, they are called \textit{Darboux–Haantjes coordinates}. If additionally, the Haantjes algebra $\mathscr{H}$ is semisimple --each Haantjes tensor is diagonalizable--, then on a set of Darboux-Haantjes coordinates, all Haantjes tensors of $\mathscr{H}$ take simultaneously a diagonal form.
\vspace{2mm}

The \textit{Jacobi-Haantjes} theorem proved in \cite{RTT2022CNS} states that for a given symplectic-Haantjes manifold $(M,\omega,\mathscr{H})$ of class $m$, where $\mathscr{H}$ a semisimple algebra, with $dim(M)=2m$,  if $H\in C^{\infty}(M)$ generates a Haantjes chain of length $m$ with potential functions $H_{1},\ldots,H_{n}$, then each set of Darboux-Haantjes coordinates for $(M,\omega,\mathscr{H})$ are separation coordinates of the Hamilton--Jacobi equation associated with each Hamiltonian $H_{j}$ for $j=1,\ldots n$. In particular, they are separation coordinates for the Hamilton's equations of motion.

\subsection{Main theorem on SoV in contact geometry}
Our purpose is to extend the previous approach to \textit{contact geometry}. Precisely, we wish to construct a theoretical framework allowing us to solve the problem of determining separation coordinates for the contact Hamilton's equations of motion of completely integrable contact Hamiltonian systems. 
\vspace{2mm}

Let $(M,\theta,\mathscr{H})$ be a contact-Haantjes manifold of class $n+1$ with $dim(M)=2n+1$. We assume that $H\in C^{\infty}(M)$ generates an extended Haantjes chain of length $n+1$ with potential functions $H_{1}=H,\ldots,H_{n+1}$, so that for every $j\in\lbrace 1,\ldots,n+1\rbrace$ the contact Hamiltonian system $(M,\theta,H_{j})$ is completely integrable. Let $(\tilde{M},\omega,\tilde{H}_{j})$ be the lift of $(M,\theta,H_{j})$, where $\tilde{M}=M\times\mathbb{R}$, $\omega=e^{\tau}(d\theta+d\tau\wedge \theta)$ and $\tilde{H}_{j}=e^{\tau}H_{j}$. As stated in Remark \ref{remarkJHPH}, we have that $(\tilde{M},\omega,\tilde{\mathscr{H}})$ is a symplectic-Haantjes manifold of class $n+1$,  where $\tilde{\mathscr{H}}$ is the lift of the algebra $\mathscr{H}$ given by
\begin{equation}
\label{Ktildecontact}
\tilde{\bs{K}}=\bs{K}+Y\otimes d\tau+\frac{\partial}{\partial \tau}\otimes\gamma+k\frac{\partial}{\partial \tau}\otimes d\tau
\end{equation}
for each $\bs{\mathcal{K}}=(\bs{K}, Y, \gamma, k)\in\mathscr{H}$. Also, $\tilde{H}=e^{\tau}H$ generates a Haantjes chain of length $n+1$, so that the (conservative) Hamiltonian system $(\tilde{M},\omega,\tilde{H}_{j})$ is completely integrable.
Our main result is the following
\begin{theorem} \label{theo:4}
Let $(Q^{i},T,P_{i},P_{T})$, $i=1,\ldots, n$ be a system of Darboux-Haantjes coordinates for $(\tilde{M},\omega,\tilde{\mathscr{H}})$. Then the corresponding canonical coordinates $(q^{i},p_{i},z)$ on $(M,\theta)$, given by (\ref{coordinateinvT}), are separation coordinates for the contact Hamilton's equations associated with each potential function $H_{j}$, $j=1,\ldots,n+1$. 
\end{theorem}
\begin{proof}
According to the Jacobi-Haantjes theorem \cite{RTT2022CNS}, if $(Q^{i},T,P_{i},P_{T})$ are Darboux-Haantjes coordinates for $(\tilde{M},\omega,\tilde{\mathscr{H}})$, then they are separation coordinates for the (conservative) Hamilton's equations for each $\tilde{H}_{j}$. In fact, the (conservative) Hamilton's equations have the form
\begin{equation}
\left\lbrace\begin{array}{c}
\dot{Q}^{i}=F^{j}_{i}(P_{i}),\\
\dot{T}=F^{j}_{T}(P_{T}),\\
\dot{P}_{i}=0,\\
\dot{P}_{T}=0,
\end{array}\right. 
\end{equation}
for some functions $F^{j}_{1},\ldots,F^{j}_{n},F^{j}_{\tau}\in C^{\infty}M$.
Now, taking into account the coordinate transformation given by (\ref{coordinateinvT}), we obtain
\begin{equation}
\left\lbrace\begin{array}{c}
\dot{q}^{i}=-\dot{\tau}q^{i}+e^{-\tau}F^{j}_{i}(p_{i}),\\
\dot{p}_{i}=0,\\
\dot{z}=-\dot{\tau}z+e^{-\tau}\displaystyle{\sum_{i=1}^{n}}p_{i}F^{j}_{i}(p_{i}).
\end{array}\right. 
\end{equation}
By projecting the dynamics onto the manifold $\tau=0$, therefore $\dot{\tau}=c\in\mathbb{R}$, we have that the dynamical equations on $M$ for the Hamiltonian function $H_{j}\equiv\tilde{H}_{j}\vert_{\tau=0}$ are
\begin{equation}
\left\lbrace\begin{array}{c}
\dot{q}^{i}=F^{j}_{i}(p_{i})-cq^{i},\\
\dot{p}_{i}=0,\\
\dot{z}=-cz+\displaystyle{\sum_{i=1}^{n}}p_{i}F^{j}_{i}(p_{i}).
\end{array}\right. 
\end{equation}
\end{proof}

\section{New completely integrable dissipative Hamiltonian systems and their associated Jacobi-Haantjes structures}
\label{sec7}

We present some examples of nontrivial completely integrable dissipative Hamiltonian systems on contact manifolds, all of them new to the best of our knowledge, which are obtained in the framework of Jacobi-Haantjes geometry.

\subsection{JH manifolds for 3D systems}
Let us consider first a contact manifold $(M,\theta)$ with $\dim M=3$ and canonical coordinates $(p,q,z)$.

\subsubsection{} Let us consider the extended Haantjes algebra $\mathscr{H}_{1}$ on $M$ generated by the Haantjes operators 
\beq \label{eq:K1}
\mathcal{K}_1= \left(\frac{\partial}{\partial p}\otimes \rd p+\frac{\partial}{\partial q}\otimes \rd q+\frac{\partial}{\partial z}\otimes \rd z,0,0,1\right)
\eeq
and
\bea \nn
\mathcal{K}_2&=& \left(\frac{\partial}{\partial p}\otimes \rd p+\frac{1}{2}(q-3p^2)\frac{\partial}{\partial p}\otimes \rd z+\frac{\partial}{\partial q}\otimes \rd q+ \frac{p}{2}\frac{\partial}{\partial q}\otimes \rd z+\frac{\partial}{\partial z}\otimes \rd z,Y_{2},0,0\right).
\eea
with $Y_{2}=\frac{1}{2}p\frac{\partial}{\partial p}+(\frac{1}{2}q-\frac{3}{2}p^{2})\frac{\partial}{\partial q}+(\frac{1}{2}qp-\frac{3}{2}p^{3})\frac{\partial}{\partial z}$. We have that $(M,\theta,\mathscr{H}_{1})$ is a contact-Haantjes manifold, and that \beq
H_1=p^2-\frac{p^3}{2}+\frac{p q}{2}-z
\eeq
generates an extended Haantjes chain of length $2$ with potential functions $H_{1}$ and
\beq
H_{2}=p^2.
\eeq
Thus, $(M,\omega, \{H_1,H_2)\}$ is a completely integrable contact Hamiltonian system modeling a dissipative system.
\vspace{2mm}

Now, let   $(\tilde{M},\omega)$ be the symplectization of the contact manifold $(M,\theta)$, i.e., $\tilde{M}=M\times\mathbb{R}$ and $\omega=e^{\tau}(d\theta+d\tau\wedge \theta).$ Let us consider  canonical coordinates $(Q,T,P,P_{t})$ on $(\tilde{M},\omega)$ given by (\ref{coordinateT}). Finally, let $\tilde{\mathscr{H}}$ be the lift of the algebra $\mathscr{H}$ given by the operators (\ref{Ktildecontact}). We have that the coordinates $(\tilde{Q},\tilde{T},\tilde{P},\tilde{P}_{t})$ given by
\begin{equation}
\left\lbrace\begin{array}{c}
\tilde{Q}=\frac{3}{2}e^{T}P^{3}+\frac{1}{2}QP,\\
\tilde{T}=T,\\
\tilde{P}=2lnP+T,\\
\tilde{P}_{T}=\frac{1}{2}e^{T}P^{3}+\frac{1}{2}QP+P_{T},
\end{array}\right. 
\end{equation}
are Darboux-Haantjes coordinates for $(\tilde{M},\omega,\tilde{\mathscr{H}})$; indeed
\begin{equation}
\tilde{K}_{1}=\frac{\partial}{\partial \tilde{Q}}\otimes d\tilde{Q}+\frac{\partial}{\partial \tilde{T}}\otimes d\tilde{T}+\frac{\partial}{\partial \tilde{P}}\otimes d\tilde{P}+\frac{\partial}{\partial \tilde{P}_{T}}\otimes d\tilde{P}_{T}
\end{equation}
and
\begin{equation}
\tilde{K}_{2}=\frac{\partial}{\partial \tilde{Q}}\otimes d\tilde{Q}+\frac{\partial}{\partial \tilde{P}}\otimes d\tilde{P}.
\end{equation}
According to Theorem \ref{theo:4}, we have that the corresponding canonical coordinates $(\tilde{q},\tilde{p},\tilde{z})$ on $(M,\theta)$, given by (\ref{coordinateinvT}), are separation coordinates for the contact Hamilton's equations associated with each potential function $H_{1},H_{2}$. In fact, the expression for the lifts $\tilde{H}_{1},\tilde{H}_{2}$ of $H_{1},H_{2}$ on the coordinates $(\tilde{Q},\tilde{T},\tilde{P},\tilde{P}_{t})$ are
\begin{equation}
\tilde{H}_{1}=e^{\tilde{P}}-\tilde{P}_{T}
\end{equation}
and
\begin{equation}
\tilde{H}_{2}=e^{\tilde{P}}.
\end{equation}
Thus, on $(\tilde{q},\tilde{p},\tilde{z})$ the corresponding Hamilton's equations of motion are
\begin{equation}
\left\lbrace\begin{array}{c}
\dot{\tilde{q}}=e^{\tilde{p}}+\tilde{q},\\
\dot{\tilde{p}}=0,\\
\dot{\tilde{z}}=z+\tilde{p}e^{\tilde{p}},
\end{array}\right. 
\end{equation}
and
\begin{equation}
\left\lbrace\begin{array}{c}
\dot{\tilde{q}}=e^{\tilde{p}},\\
\dot{\tilde{p}}=0,\\
\dot{\tilde{z}}=\tilde{p}e^{\tilde{p}},
\end{array}\right. 
\end{equation}
for the Hamiltonian functions $H_{1}$ and $H_{2}$ respectively.

\subsubsection{} Another interesting case arise when considering the extended Haantjes algebra $\mathscr{H}_{2}$ generated by the operator \eqref{eq:K1} and
\bea \nn
\mathcal{K}_2 &=& \left(\frac{\partial}{\partial p}\otimes \rd p-\Big[\frac{\sin p}{2\cos^2 p }(3+\cos 2p+2 q \sin p)\Big]\frac{\partial}{\partial p}\otimes \rd z \right. \\ \nn
&+& \left. \frac{\partial}{\partial q}\otimes \rd q+ (p-\tan p)\frac{\partial}{\partial q}\otimes \rd z+\frac{\partial}{\partial z}\otimes \rd z,Y_{2},0,0\right)
\eea
with $Y_{2}=\tan p\frac{\partial}{\partial p}+(-\sin p-\sec p\tan p-q\tan^{2}p)\frac{\partial}{\partial q}+(-p\sin p+q\tan p-p\sec p\tan p-q\sec^{2}p\tan p-qp\tan^{2}p+q\tan^{3}p)\frac{\partial}{\partial z}$. The triple $(M,\theta,\mathscr{H}_{2})$ is a contact-Haantjes manifold. In this case, the trigonometric Hamiltonian function 
\beq
H_{1}=p q-z +\sin p -(q+\sin p) \tan p
\eeq
generates an extended Haantjes chain of length $2$ with potential functions $H_{1}$ and
\beq
H_{2}=\sin p.
\eeq
Consequently, $(M,\omega,\{H_1,H_2\})$ is a completely integrable contact Hamiltonian system modeling a dissipative system.

\subsection{5D systems and JH manifolds}
Let $M$ be a differentiable manifold, with $\dim M=5$. 

We introduce the extended Haantjes algebra $\mathscr{H}_3$ generated by
\beq \label{eq:iden}
\mathcal{K}_1= \left(\sum_{i=1}^{2}\frac{\partial}{\partial p_i}\otimes \rd p_i+\sum_{i=1}^{2}\frac{\partial}{\partial q_i}\otimes \rd q_i+\frac{\partial}{\partial z}\otimes \rd z,0,0,1\right)
\eeq
\bea
\mathcal{K}_2&=&  \Bigg(\frac{\partial}{\partial p_1}\otimes \rd p_1+ \frac{\partial}{\partial p_2}\otimes \rd p_2+\frac{\partial}{\partial q_1}\otimes \rd q_1+\frac{\partial}{\partial q_2}\otimes \rd q_2 \\ \nn 
&+&   (2 p_1+4 q_1-1)  \frac{\partial}{\partial z}\otimes \rd p_1+ (2 p_2+ 4 q_2) \frac{\partial}{\partial z}\otimes \rd p_2 \\ \nn
&+& (4 p_1+ 6 q_1 -2) \frac{\partial}{\partial z}\otimes \rd q_1 +  (4 p_2+ 6 q_2) \frac{\partial}{\partial z} \otimes \rd q_2, Y_{2},0,0 \Bigg)
\eea
with $Y_{2}=\frac{1}{2}\frac{\partial}{\partial p_{1}}$,
and
\bea
\nn \mathcal{K}_3&=&  \Bigg(\frac{\partial}{\partial p_1}\otimes \rd p_1 +  \frac{\partial}{\partial p_2}\otimes \rd p_2 + \frac{\partial}{\partial q_1}\otimes \rd q_1 + \frac{\partial}{\partial q_2}\otimes \rd q_2 + \\ \nn  &+& \Big(2 p_1+4 q_1+\frac{(p_2+2 q_2)^2}{(p_1+2 q_1)^2}\Big) \frac{\partial}{\partial z}\otimes \rd p_1 +\Big(2 p_2+ 4 q_2 -\frac{2(p_2+2 q_2)}{p_1+2 q_1}\Big) \frac{\partial}{\partial z}\otimes \rd p_2 \\ \nn &+&  \Big(4 p_1+ 6 q_1 +\frac{2(p_2+2 q_2)}{(p_1+2 q_1)^2}\Big) \frac{\partial}{\partial z }\otimes \rd q_1 +\Big(4 p_2+ 6 q_2-\frac{4(p_2+2 q_2)}{p_1+2 q_1}\Big) \frac{\partial}{\partial z} \otimes \rd q_2,Y_{3},0,0 \Bigg) \\
\eea
with $Y_{3}=\frac{p_{2}+2q_{2}}{2p_{1}+4q_{1}}\frac{\partial}{\partial p_{2}}$. We have that $(M,\theta,\mathscr{H}_{3})$ is a contact-Haantjes manifold, and that
\beq
H_1=p_1^2 +p_2^2+ 4 p_1 q_1 +3 q_1^2 + 4 p_2 q_2 +3 q_2^2 -z
\eeq
generates an extended Haantjes chain of length $3$ with potential functions $H_{1}$,
\bea
H_2 &=& p_1+2 q_1,  \\
H_3 &=& \frac{(p_2+2 q_2)^2}{p_1+ 2 q_1}.
\eea
Therefore, in this case we have defined a 5-dimensional completely integrable contact Hamiltonian system.

\subsubsection{} Another interesting integrable dissipative system arises from the Haantjes algebra $\mathscr{H}_4$, generated by the tensor field $\mathcal{K}_1$ given in eq. \eqref{eq:iden}, and the tensor fields
\bea \nn
\mathcal{K}_2&=& \Bigg(\frac{\partial}{\partial p_1}\otimes \rd p_1+ \frac{\partial}{\partial p_2}\otimes \rd p_2+\frac{\partial}{\partial q_1}\otimes \rd q_1+\frac{\partial}{\partial q_2}\otimes \rd q_2 \\ \nn 
&+&   \Big(2p_1(e^{p_2}-1) + \frac{1+q_1}{2}\Big)  \frac{\partial}{\partial z}\otimes \rd p_1+ \left(e^{p_2}\left(p_1^2 + \frac{p_2-2}{p_2^3}\right) + q_2\right) \frac{\partial}{\partial z}\otimes \rd p_2 \\ \nn
&+& \frac{p_1}{2} \frac{\partial}{\partial z}\otimes \rd q_1 +  p_2 \frac{\partial}{\partial z} \otimes \rd q_2, Y_2,0,0 \Bigg), \\
\eea
with $Y_{2}=\frac{1}{2e^{p_{2}}}\left(p_{1}\frac{\partial}{\partial p_{1}}+(1+q_{1})\frac{\partial}{\partial q_{1}}+(1+q_{1})p_{1}\frac{\partial}{\partial z}\right)$, and
\bea
\nn \mathcal{K}_3&=&  \Bigg(\frac{\partial}{\partial p_1}\otimes \rd p_1 +  \frac{\partial}{\partial p_2}\otimes \rd p_2 + \frac{\partial}{\partial q_1}\otimes \rd q_1 + \frac{\partial}{\partial q_2}\otimes \rd q_2 + \\ \nn  &+& \Big(2p_1 e^{p_2} + \frac{1+q_1}{2}\Big) \frac{\partial}{\partial z}\otimes \rd p_1 +\Big(e^{p_2}\left(p_1^2 + \frac{p_2-2}{p_2^3}-1\right) + q_2\Big) \frac{\partial}{\partial z}\otimes \rd p_2 \\ \nn &+&  \frac{p_1}{2} \frac{\partial}{\partial z }\otimes \rd q_1 + p_2 \frac{\partial}{\partial z} \otimes \rd q_2, Y_3,0,0 \Bigg) \\
\eea
with $Y_{3}=\frac{1}{p_{2}^{2}}\frac{\partial}{\partial p_{2}}+\left(\frac{e^{p_{2}}}{p_{1}^{2}p_{2}^{2}}-\frac{2e^{p_{2}}}{p_{1}^{2}p_{2}^{3}}+\frac{q_{2}}{p_{1}^{2}}\right)\frac{\partial}{\partial q_{2}}+\left(\frac{e^{p_{2}}}{p_{1}^{2}p_{2}}-\frac{2e^{p_{2}}}{p_{1}^{2}p_{2}^{2}}+\frac{p_{2}q_{2}}{p_{1}^{2}}\right)\frac{\partial}{\partial q_{z}}$.
\vspace{2mm}

We have that $H_1= p_1^2 e^{p_2}+ \frac{e^{p_2}}{p_2^2}+ \frac{p_1}{2} +\frac{p_1 q_1}{2}+ p_2 q_2  -z$ generates an extended Haantjes chain of length $3$ with potential functions $H_{1}$, $H_2=p_1^2$ and $H_3=e^{p_2}.$

\subsection{A new $(2n+1)$-dimensional completely integrable dissipative system}
\subsubsection{} Finally, we consider the $n+1$ dimensional Haantjes algebra $\mathscr{H}_5$ generated by the operators $\mathcal{K}_0,\mathcal{K}_1, \ldots, \mathcal{K}_n$, where
\beq
\mathcal{K}_0= \left(\sum_{i=1}^{n}\frac{\partial}{\partial p_i}\otimes \rd p_i+\sum_{i=1}^{n}\frac{\partial}{\partial q_i}\otimes \rd q_i+\frac{\partial}{\partial z}\otimes \rd z,0,0,1\right)
\eeq
and
\bea \nn
\mathcal{K}_j &=&  \Bigg(\sum_{i=1}^{n}\frac{\partial}{\partial p_i}\otimes \rd p_i + \sum_{i=1}^{n} \frac{\partial}{\partial q_i}\otimes \rd q_i + \sum_{i=1}^{n} \Big(2 p_i + \frac{1+q_i}{2} - 2p_{j-1} \delta_{j-1,i}\Big)  \frac{\partial}{\partial z}\otimes \rd p_i \\  \nn &+& \sum_{i=1}^{n} \frac{p_i}{2} \frac{\partial}{\partial z }\otimes \rd q_i, Y_j,0,0 \Bigg),
\eea
with $Y_{j}=\frac{p_{j}}{2}\frac{\partial}{\partial p_{j}}+(\frac{1+q_{j}}{2})\frac{\partial}{\partial q_{j}}+\frac{(1+q_{j})p_{j}}{2}\frac{\partial}{\partial z}$, $j=1,\ldots, n$.
\vspace{2mm}

We have that
\bea
H_0= \sum_{i=1}^{n} p_i^2+ \frac{1}{2}\sum_{i=1}^{n}\big( p_i+ p_i q_i\big) -z
\eea
generates an extended Haantjes chain of length $n+1$ with potential functions $H_{0},H_{1},\ldots,H_{h}$ given by
\bea
H_j = p_j^2,\quad j=1,\ldots,n.
\eea

\section*{Data Availability Statement}  No datasets were generated or analyzed during the current study.

\section*{Conflict of Interest Statement} The authors declare that they have no conflicts of interest.

\section*{Acknowledgement}

R. A. wishes to thank the financial support provided by the Secretaría de Ciencia, Humanidades, Tecnología e Innovación (SECIHTI) of Mexico through a postdoctoral fellowship under the Estancias Posdoctorales por México 2022 program. 


The research of P. T. has also been supported by the Project PID2024-156610NB-I00 of Ministerio de Ciencias, Innovaci\'on y Universidades,  and by  the Severo Ochoa Programme for Centres of Excellence in R\&D
(CEX-2023-001347-S), Ministerio de Ciencia, Innovaci\'{o}n y Universidades y Agencia Estatal de Investigaci\'on, Spain.

P.T. is a member of the Gruppo Nazionale di Fisica Matematica (GNFM) of the Istituto Nazionale di Alta Matematica (INdAM).

\end{document}